\newcommand{\bi}{\begin{itemize}}
\newcommand{\ei}{\end{itemize}}
\newcommand{\p}{\partial}
\def\be{\begin{equation}}
\def\ee{\end{equation}}
\def\bn{\begin{enumerate}}
\def\en{\end{enumerate}}
\newcommand{\nn}{\nonumber}
\newcommand{\bea}{\begin{eqnarray}}
\newcommand{\eea}{\end{eqnarray}}
\newcommand{\ba}{\begin{array}}
\newcommand{\ea}{\end{array}}
\newcommand{\bl}{\begin{align}}
\newcommand{\el}{\end{align}}
\newcommand{\al}{\alpha}
\newcommand{\ta}{\tau}
\newcommand{\g}{\gamma}
\newcommand{\lam}{\lambda}
\newcommand{\ep}{\varepsilon}
\newcommand{\eps}{\epsilon}
\newcommand{\bs}{\begin{subequations}}
\newcommand{\es}{\end{subequations} \noindent}
\newcommand{\half}{\tfrac{1}{2}}
\DeclareMathOperator{\sech}{sech}
\theoremstyle{thmstyleone}%
\newtheorem{theorem}{Theorem}
\newtheorem*{lemma}{Lemma}
\newtheorem{proposition}[theorem]{Proposition}%
\theoremstyle{thmstyletwo}%
\newtheorem*{remark}{Remark}%
\newtheorem*{corollary}{Corollary}
\theoremstyle{thmstylethree}%
\begin{document}

\title[Some exact results on the BKL scenario]
{Some exact results on the Belinski-Khalatnikov-Lifshitz scenario}


\author*{Piotr P. Goldstein}\footnote{email: \texttt{piotr.goldstein@ncbj.gov.pl}}
ORCID 0000-0002-0236-5332



\affiliation{Theoretical Physics Division National Centre for
Nuclear Research, Pasteura 7, Warsaw, Poland}


\date{\today}
\begin{abstract}
The well-known Belinski-Khalatnikov-Lifshitz (BKL) scenario for
the universe near the cosmological singularity is supplemented
with a few exact results following from the BKL asymptotic of the
Einstein equations: (1) The cosmological singularity is proved to
be an inevitable beginning or end of the universe as described by
these equations. (2) Attaining the singularity from shrinking
initial conditions requires infinite time parameter $\tau$; no
singularity of any kind may occur in a finite $\tau$. (3) The
previously found exact solution [P.G. and W. Piechocki, Eur. Phys.
J. C 82:216 (2022)] is the only asymptotic with well-defined
proportions between the directional scale factors which have been
appropriately compensated against indefinite growth of anisotropy.
In all other cases, the universe undergoes oscillations of Kasner
type, which reduce the length scales to nearly zero in some
directions, while largely extending it in the others. Together
with instability of the exact solution [op. cit.], it makes the
approach to the singularity inevitably chaotic. (4) Reduced
equations are proposed and explicitly solved to describe these
oscillations near their turning points. In logarithmic variables,
the oscillations are found to have sawtooth shapes. A by-product
is a quadric of kinetic energy, a simple geometric tool for all
this analysis.
\end{abstract}
\maketitle
\tableofcontents

\section{Introduction}\label{sec1}
The Belinski-Khalatnikov-Lifshitz (BKL) scenario plays a
significant role among classical descriptions of the universe
close to the cosmological singularity. It has been developed as an
answer to the question whether the singularity follows from the
Einstein equations for a set of nonzero measure on the manifold of
initial or final conditions \cite{LK}. The manifold should have
the proper dimensionality, i.e., depend on the proper number of
arbitrary parameters. Moreover, the singularity should be physical
reality rather than a result of simplifying assumptions of a
simplistic model or a choice of a special frame of reference.

The first question concerned the sheer existence of such a
singularity. The initial approach in \cite{LK} gave the negative
answer to this question. Namely, the authors demonstrated
possibility of transformation to the synchronous frame of
reference (in which time is the proper time at each point) and
proved that all solutions having the required properties, such
that the corresponding determinant of the spatial metric tensor
vanishes in a finite time, are fictitious because the singularity
may vanish in other reference systems. However, later R. Penrose
\cite{Penrose} and S. Hawking \cite{Hawking} proved existence of a
singularity independent of the frame of reference, first for the
case of a collapsing star \cite{Penrose}, then for a class of
cosmological models \cite{Hawking}. This made the authors of
\cite{LK} reconsider their assertion. To obtain a universe with a
physical singularity, BKL considered \cite{BKL} possible
generalizations of Kasner's homogeneous universe without matter
\cite{MTW}.

The original Kasner's solutions exactly satisfy the Einstein
equations without the energy-momentum term. They describe an
Euclidean metric; for orthogonal principal directions, it reads
\cite{MTW}
\be\label{Kasner}
ds^2=dt^2-(t^{2p_1}dx_1^2+t^{2p_2}dx_2^2+t^{2p_3}dx_3^2)
\ee
where the constants $p_1,\,p_2,\,p_3$ satisfy
\be\label{p-cond1}
p_1+p_2+p_3=1 ~\text{ and}\quad p_1^2+p_2^2+p_3^2=1.
\ee
As \eqref{p-cond1} infers that one of these exponents has to be
negative, the Kasner solutions are singular at $t=0$. Moreover,
they are anisotropic, and their anisotropy, measured by ratios of
scales in the principal spatial directions, grows indefinitely on
the approach to the singularity (with the exception of
$p_1=p_2=0,\,p_3=1$, up to exchange or the indices). The
differential of the volume is proportional to $t$, hence its limit
is equal to $0$ if $t\to 0^+$ .

The Kasner metric \eqref{Kasner} may be one of the possible
answers to another important question: on symmetries of the
primordial universe, close to the singularity. Although the recent
universe is isotropic, it does not mean that it has been isotropic
from the beginning. Therefore, cosmological models allowing for
primordial anisotropy should be taken into consideration.

The authors of \cite{BKL} looked for a generalization of the
Kasner solutions to a possibly large class preserving Kasner's
properties: homogeneity and increasing anisotropy on the approach
to the singularity. The asymptotic behavior of the universe under
these assumptions is known as the BKL scenario. Limiting the
interest to the region close to the singularity allows for
reduction of the Einstein equations to relatively simple ordinary
differential equations (ODE).

For completeness, we provide a short summary how these ODE follow
from their Einstein origin, as derived in \cite{Ryan}.

(i) In the neighborhood of the singularity, the influence of
matter, described by the r.h.s. of the Einstein equations, i.e.,
the energy-momentum tensor, may be neglected. This assumption is
based on the observation \cite{BKL} that the energy-momentum term
has higher order in time (if the singularity is at $t=0$),
compared to the singular behavior of the spacetime curvature. (ii)
For further calculations, the authors of \cite{BKL} choose the
synchronous frame of reference. This choice allows for assuming
the metric tensor in the form
\be\label{metric}
ds^2=dt^2-\g_{ab}(t)dx^a dx^b,
\ee
where $\g_{ab}$ are components of the spatial metric tensor in the
coordinate system $x^a,~a=1,2,3$ (summation convention is used to
covariant-contravariant pairs of indices).

Metric \eqref{metric}, substituted to the Einstein equations,
reduces the 4-dimensional (4D) problem to a problem of finding a
3D metric. In the Bianchi classification, the corresponding Lie
algebra is Bianchi~IX. (iii) Its structure constants are chosen as
$\ep_{abc}$, where $\ep$ is the Levi-Civita antisymmetric symbol.

(iv) For the sake of simplicity, $dt$ in \eqref{metric} is
rescaled by the factor of the spatial volume, according to
\be
dt=\sqrt{\g} d\tau,
\ee
where $\g$ is the determinant of the spatial (time-dependent)
metric tensor $\g_{ab}$ ($a,\, b =1,\, 2,\, 3$). In Kasner's
metric, we have $\sqrt{\g}=t$, whence $\tau$ is a logarithmic time
parameter, ${d\tau=d\ln t}$. This way, the singularity at $t=0$
appears as the limit $\tau\to -\infty$ or (by symmetry of the
resulting equations) $\tau\to\infty$. The latter is our choice of
this time parameter. With this choice, a collapse at
$\tau\to\infty$ will correspond to $t\to 0^+$ i.e. to the backward
trip down to the initial singularity.

(v) Of the ten Einstein equations for the spacetime without
matter, those describing the time-spatial, i.e. $^0_a$ components,
are found to provide only relations between constants; they do not
describe the dynamics. What remains are 3 diagonal and 3
off-diagonal equations for the spatial, i.e. $^b_a$ components,
and one equation for the temporal $^0_0$ component. (vi) The order
2 of the off-diagonal spatial equations is reduced to 1 with the
aid of the Bianchi identities. (vii) Two of three corresponding
constants of the integration are set to 0, which determines the
orientation of the chosen coordinate system (thus further
specifying the frame of reference).

(viii) In general, the principal axes of the spatial metric tensor
would rotate with respect to the chosen fixed frame. However, it
is proven that the rate of the rotation (e.g. in terms of the
Euler angles) tends to zero on the approach to the singularity,
provided that the ratios of the two shorter length scales along
the principal axes to the longest one tend to zero (this is the
indefinitely growing anisotropy). Hence, the angles tend to the
respective constant values on the approach to the singularity.
This reduces the task to solving one temporal ODE and three
spatial ODE for these length scales.

(ix) In the zero order in the aforementioned ratios of scales, we
obtain the required three spatial ODE, namely
\begin{equation}\label{L1}
\frac{d^2 \ln a  }{d \ta^2} = \frac{b}{a}- a^2,~~~~\frac{d^2 \ln b
}{d \ta^2} = a^2 - \frac{b}{a} + \frac{c}{b},~~~~\frac{d^2 \ln c
}{d \ta^2} = a^2 - \frac{c}{b}.
\end{equation}
where quantities $\,a=a(\ta),\, b=b(\ta)$ and $\,c=c(\ta)$ are, up
to constants of order $1$, proportional to squares of the length
scales in three principal directions of the chosen synchronous
reference system. They are known as directional scale factors. The
anisotropy assumption results in $a\gg b\gg c$.

These ODE are subject to the constraint imposed by the temporal
equation
\begin{equation}\label{L2}
\frac{d\ln a}{d\ta}\;\frac{d\ln b}{d\ta} + \frac{d\ln
a}{d\ta}\;\frac{d\ln c}{d\ta} + \frac{d\ln b}{d\ta}\;\frac{d\ln
c}{d\ta} = a^2 + \frac{b}{a} + \frac{c}{b} \, ,
\ee

Equations \eqref{L1}, together with \eqref{L2}, will be shortly
called the BKL equations. Due to their time-reversibility, they
may describe both, expansion of the universe starting from the
singularity or its final collapse.

Numerous papers, were devoted to both analytic and numeric
analyses of the BKL scenario, e.g.
\cite{BKL,BKL3,Ryan,Belinski:2014kba}. A Hamiltonian approach was
analyzed in detail in \cite{CP}, and a comparison with the
diagonal Mixmaster universe was done in \cite{CMX}. The scenario
was discussed in detail, on a broad background of related Bianchi
models, in the book by Belinski and Henneaux \cite{book}.

The goal of the present paper is to supplement their work by a few
exact results. All of them may be attained within the physics
described by the BKL equations \eqref{L1}, \eqref{L2}. The first
result was obtained in our previous paper with W. Piechocki
\cite{GP}, where we found an exact solution of these equations.
Namely, the solution reads
\be\label{solution}
a(\ta)= \frac{3}{\lvert \ta-\ta_0\rvert },~~ b(\ta)= \frac{30}{
\lvert \ta-\ta_0\rvert ^{3}},~~ c (\ta)= \frac{120}{\lvert
\ta-\ta_0\rvert ^{5}} \,
\ee
where   $\lvert \ta - \ta_0\rvert  \neq 0$ and $\ta_0$ is an
arbitrary real number.

In \cite{GP}, we also found that the exact solution
\eqref{solution} is unstable to small perturbations of the initial
conditions. In more detail, the perturbations evolve into two
oscillatory components. Although their amplitudes tend to zero as
$\ta\to\infty$, the instability is manifested in the growth of the
ratios of the perturbation amplitudes to the respective perturbed
scale factors; these ratios increase as $\ta^{1/2}$. A
characteristic value of the ratio between the two oscillation
frequencies (approximately equal to 2.06) is one of the results of
\cite{GP}; some chance exists that this ratio could have left
marks in the spectra of presently observed waves.

The new exact results which are demonstrated in this paper are
\bn
\item
All solutions of the BKL equations in which the initial volume
decreases with the time parameter $\ta$, i.e.
${dV/d\ta|_{\ta=0}<0}$, lead to the total collapse (in all three
directions) for $\ta\to\infty$ (subsection \ref{Sub5.2},
Proposition \ref{g_i=0}).
\item
For such initial conditions, no singularity occurs in finite
$\ta$, i.e., the scale factors remain finite and nonzero
(subsection \ref{cone}, Proposition \ref{regular}).
\item
The exact solution is the only one which collapses with
well-defined proportion between ratios of the directional scale
factors raised to the appropriate powers (the raising to power
compensates for the indefinitely growing anisotropy). For all
other solutions of \eqref{L1}, \eqref{L2}, the collapse is
attained through infinitely many oscillations between Kasner's
epochs (such behavior was already found in a different approach to
the asymptotics in \cite{BKL69} and in \cite{BKL}). If we approach
the singularity for $t\to 0^+$, the frequency of the oscillations
must tend to infinity and thus the approach is chaotic. Together
with the previously found \cite{GP} instability of this only
exception, it infers that chaos on the approach to the singularity
is inevitable (subsection \ref{Sub4.2}, Proposition
\ref{uni-apex}).
\item
BKL have shown that the dynamics is indeed Kasner-type and each
solution of this type loses its validity after some time
\cite{Ryan}. Then the role of the $p_i,~i=1, \,2,\,3$ in equations
\eqref{Kasner} and \eqref{p-cond1} is exchanged, i.e. transition
to another Kasner epoch occurs \cite{Ryan} (like in the Mixmaster
universe \cite{MTW}). This is the oscillatory approach to the
singular point, predicted in \cite{BKL}. In the present paper, we
provide explicit description of the dynamics in the transition
period between the epochs (subsection \ref{transition}). It
results (among other things) in finding that these are sawtooth
oscillations in the logarithmic variables, with explicitly
calculable gradients of the teeth.
\en
A by-product of the calculation is a cone (more general -- a
quadric) of kinetic energy, a simple geometric tool whose role is
similar to the well-known diagrams invented by Misner \cite{MTW};
it is described in detail in subsection \ref{cone}.

This paper is structured as follows:

In section \ref{Sec2}, the basic information on the BKL scenario
is shortly summarized. It includes the basic properties of
equations \eqref{L1}, \eqref{L2} and their Lagrangian-Hamiltonian
structure. Section \ref{methods} contains description of methods,
especially the geometric tool of the present analysis, which is
the cone of the kinetic part of the Lagrangian (further called
``kinetic energy''). Section \ref{Sec4} contains (in \ref{Sub4.2})
one of the main results, which is uniqueness of the exact solution
\eqref{solution} as the only one in which the collapse of the
universe has a definite proportion of the length scales (raised to
the appropriate powers to compensate for the indefinitely growing
anisotropy). In section \ref{Sec5}, the Kasner-type solutions are
described. A result stating that the BKL equations are not
satisfied by the exact Kasner solutions, even in the limit
$\ta\to\infty$, is given in subsection \ref{Sub5.2}. Section
\ref{quasi} is devoted to the Kasner-type solutions, with
subsection \ref{transition} discussing details of the transitions
between the adjacent Kasner epochs.

Most results are organized in a system of simple propositions and
their straightforward proofs. The proofs longer than a few lines
have been put off to two appendices.

\section{Basic properties of the BKL equations}\label{Sec2}
\indent\textit{Symmetries:} \cite{P-proc} The way in which
equations, \eqref{L1}, \eqref{L2} were obtained determines that
there is no symmetry under permutation of $a,\,b$ and $c$. On the
contrary, the growing anisotropy assumption results in $a\gg b\gg
c$. The system is evidently symmetric under time reversal
$\ta\rightleftarrows -\ta$; thus it can describe the universe in
both, a collapse or an explosion as its reversal. The equations
have two Lie symmetries \cite{P-proc}. The first one is a shift in
the time parameter $\ta\rightleftarrows \ta-\ta_0$ for any $\ta_0$
(which is obvious for an autonomous system). The second is a
scaling symmetry: If $a,\,b$ and $c$ constitute a solution of
\eqref{L1}, \eqref{L2}, and $\lam$ is the scaling parameter, such
that
\be\label{Lie}
\ta'=\lam \ta,~~a'=a/\lam,~~ b'=b/\lam^3,~~c'=c/\lam^5,
\ee
then $a',\,b'$ and $c'$ as functions of $\ta'$ make another
solution of the system \cite{P-proc}.

\textit{Dependence:} At first glance, system \eqref{L1},
\eqref{L2} appears to be overdetermined, due to the constraint
\eqref{L2} imposed on solutions of \eqref{L1}. However, the
constraint \eqref{L2} specifies a value of a first integral of
\eqref{L1}. Therefore, each of the equations \eqref{L1} may be
obtained from a system consisting of the other two of \eqref{L1}
and the constraint \eqref{L2}. E.g. \cite{P-proc}, if we
substitute $\ddot{a}$ and $\ddot{b}$ from the first two of the
equations \eqref{L1} into the $\ta$-derivative of the constraint
\eqref{L2}, we obtain the third equation of \eqref{L1} multiplied
by $(\dot{a}/a + \dot{b}/b)$ (the dot denotes $\ta$
differentiation). This way, the third equation of \eqref{L1} is
shown to be dependent on the other two and the constraint
\eqref{L2} (the expression in the parentheses yields $ab= const.$,
which is inconsistent with equations \eqref{L1}, \eqref{L2}).

\textit{Canonical structure} \cite{CP} Substitution
\be\label{a2x}
a=\exp(x_1),\quad b=\exp(x_2),\quad c=\exp(x_3)
\ee
yields a system derivable from a Lagrangian
\be\label{Lagrx}
\mathcal{L}=\dot{x}_1\dot{x}_2+\dot{x}_2\dot{x}_3+\dot{x}_3\dot{x}_1+\exp(2x_1)+\exp(x_2-x_1)+\exp(x_3-x_2),
\ee
with the constraint \eqref{L2} turning into \cite{CP}
\be\label{consx}
\mathcal{H}\colonequals\sum_{i=1}^3\frac{\p\mathcal{L}}{\p
\dot{x}_i}\dot{x}_i-
\mathcal{L}=\dot{x}_1\dot{x}_2+\dot{x}_2\dot{x}_3+\dot{x}_3\dot{x}_1-\exp(2x_1)-\exp(x_2-x_1)-\exp(x_3-x_2)=0.
\ee
Equation \eqref{consx} clarifies the sense of the dependence
between \eqref{L1} and \eqref{L2}: the constraint \eqref{L2} is a
particular choice of the first integral $\mathcal{H}$ for
solutions of equations \eqref{L1}, namely $\mathcal{H}=0$.

The Lagrangian \eqref{Lagrx} has a well defined potential and
kinetic ``energies''. The latter is an indefinite quadratic form
of signature $(+,-,-)$, whose zero surface is a cone. As seen from
\eqref{consx}, the ``potential energy'' is always negative while
the ``total energy'' is zero. This means that the ``kinetic
energy'' is positive, i.e., position of the system in the space of
``velocities'' varies inside a cone (further on, the quotation
marks will be omitted, also for the accelerations, i.e.,
derivatives of the velocities, as well as the kinetic, potential
and total energies).

\section{Methods}\label{methods}
We apply the aforementioned Lagrangian formalism, and illustrate
the evolution of the system by its trajectory in the space of
velocities in the diagonalized version of Lagrangian \eqref{Lagrx}
(defined below, in the first subsection).

\subsection{Useful variables}
Transformation \eqref{a2x} naturally replaces the original
variables $a,\,b,\,c$ by their logarithms $x_1,\,x_2,\,x_3$,
suitable for the Lagrangian description. However, the description
becomes clearer if we diagonalize the kinetic energy. If we care
about simplicity of the equations rather than unitarity of the
diagonalizing transformation (accepting its determinant to be
$-6$), a convenient substitution is
\be\label{x2u}
x_1=u_1-u_2-u_3,\quad x_2=u_1+2u_3,\quad x_3=u_1+u_2-u_3 \, ,
\ee
which yields the Lagrangian in the form diagonal in the velocities
$\dot{u}_1,\,\dot{u}_2,\,\dot{u}_3$,
\be\label{Lagru}
\mathcal{L}=3\dot{u}_1^2-\dot{u}_2^2-3\dot{u}_3^2+\exp\big(2(u_1-u_2-u_3)\big)+\exp(u_2-3u_3)+\exp(u_2+3u_3).
\ee
Variables $u_1,\,u_2,\,u_3$ define the principal directions in the
velocity space. The dynamics in the new variables is determined by
the Lagrange equations
\bs\label{Lagrequ}
\begin{align}
&\ddot{u}_1=\frac13 e^{2(u_1-u_2-u_3)},\label{Lagrequa}\\
&\ddot{u}_2=e^{2(u_1-u_2-u_3)}-e^{u_2}\cosh(3u_3),\label{Lagrequb}\\
&\ddot{u}_3=\frac13
e^{2(u_1-u_2-u_3)}-e^{u_2}\sinh(3u_3)\label{Lagrequc}.\\
&\text{with the constraint}\nn\\
\mathcal{H}\colonequals
&\,3\dot{u}_1^2-\dot{u}_2^2-3\dot{u}_3^2-e^{2(u_1-u_2-u_3)}-2e^{u_2}\cosh(3u_3)=0.\label{Lagrequcons}
\end{align}
\es

In terms of the original variables, the new ones are
\be\label{u2abc}
u_1=\frac13\ln(abc),\quad u_2=\frac12\ln(c/a),\quad
u_3=\frac16\ln(b^2/ac).
\ee
 As we can see, $u_1$ is the logarithm of the
volume scale, up to a multiplicative constant. Hence, the
diagonalization automatically separates dynamics of the volume
from that of the shape, thus doing what Misner introduced in the
first stage of his transformation for the Mixmaster model
\cite{MTW}.

The velocities might simply be expressed in terms of the canonical
momenta $p_i=\p \mathcal{L}/\p \dot{u}_i,~ i=1,2,3$; then
$\mathcal{H}$ would become the Hamiltonian, whose kinetic part was
also a diagonal quadratic form in the momenta. However, the
momenta are equal to the velocities, up to a multiplicative
constant. Therefore, we do not introduce extra momentum-variables.

Variables $a, \,b, \, c$ are not suitable for numerical
simulations, especially for their graphic presentation, because of
the disproportion between their sizes $a\gg b\gg c$. This purpose
is better served by quantities of equal order of magnitude. The
shape of equations \eqref{L1} and \eqref{L2} suggest that these
could be
\be\label{a2qrs}
q\colonequals a^2,\quad r\colonequals b/a,\quad s\colonequals c/b,
\ee
while their logarithmic counterparts
\be\label{qrs2y}
y_1\colonequals\ln{q},\quad y_2\colonequals\ln{r},\quad
y_3\colonequals\ln{s},
\ee
allow for the corresponding Lagrangian description. A simple
manipulation of the original equations \eqref{L1} leads to those
satisfied by the new variables, which may be cast into a compact
form
\be\label{L1qrs}
\left(\ba{c}\ln q\\ \ln r\\
\ln
s\ea\right)^{\!\centerdot\centerdot}=M\cdot\left(\ba{c}q\\r\\s\ea\right),
\text{~~or~~}\left(\ba{c}\ddot{y}_1\\ \ddot{y}_2\\
\ddot{y}_3\ea\right)=M\cdot\left(\ba{c}e^{y_1}\\ e^{y_2}\\
e^{y_3}\ea\right)
\ee
with the constraint given by
\be\label{L2qrs}
\frac12 \left(\ba{ccc}\ln q & \ln r &\ln s
\ea\right)^{\centerdot}\,\cdot
M^{-1}\cdot\left(\ba{c}\ln q\\ \ln r\\
\ln s\ea\right)^{\!\centerdot}-q-r-s=0
\ee
where the constant matrix $M$ is given by
\be\label{M}
M=\left(\ba{rrr}-2&\,2& 0\\2&-2&1\\0&\,1&-2\ea\right),\text{ with
 } \det M = 2, ~~ M^{-1}=\left(\,\ba{rrr}\tfrac32&\,2& 1\\2&2&1\\1&\,1&0\ea\,\right).
\ee
The simplicity of equations \eqref{L1qrs} is due to the fact that
the r.h.s. of the BKL equations \eqref{L1} are linear combinations
of $q,\,r$ and $s$ from \eqref{a2qrs}. This makes them convenient
variables for description of the BKL scenario. Their version for
$y_i,~i=1,\,2,\,3$ may be derived from a simple Lagrangian
\be\label{Lagry}
\mathcal{L}= \frac12
\left(\ba{ccc}\dot{y}_1&\dot{y}_2&\dot{y}_3\ea\right)\cdot
M^{-1}\cdot\left(\ba{c}\dot{y}_1\\ \dot{y}_2\\
\dot{y}_3\ea\right)+e^{\,y_1}+e^{\,y_2}+e^{\,y_3}
\ee
The constraint again corresponds to $\mathcal{H}=0$, where
$\mathcal{H}$ differs from the Lagrangian \eqref{Lagry}, by the
opposite signs at the exponential functions. Explicitly
\be\label{consy}
\mathcal{H} =\frac34 \dot {y} _ 1^2 +
 2 \dot {y}_1 \dot {y} _ 2 + \dot {y} _ 2^2 + \dot {y} _ 2 \dot {y}_3 + \dot {y}_3\dot {y}_1
 -\left( e^{\,y_1}+e^{\,y_2}+e^{\,y_3}\right)=0.
\ee
Diagonalization of the kinetic energy in the Lagrangian
\eqref{Lagry}, is achieved by substitution of $y_1,\,y_2$ and
$y_3$ with their values in terms of $u_1,\,u_2$ and $u_3$
respectively
\be\label{y2u}
y_1=2 (u_1 - u_2 - u_3),\quad y_2 = u_2 + 3 u_3,\quad y_3 = u_2 -
3 u_3,
\ee
which leads back to Lagrangian \eqref{Lagru} and the constrained
Lagrange equations which stem from it \eqref{Lagrequ}.

\subsection{The cone of kinetic energy\label{cone}}
Our basic geometric tool for analysis and presentation of the
dynamics will be the quadrics of kinetic energy.
\be\label{quadrics}
E_k\colonequals 3 \dot{u}_1^2-\dot{u}_2^2-3\dot{u}_3^2=\eps\ge 0.
\ee
For $\eps>0$ these are two-sheet hyperboloids, becoming a cone for
$\eps=0$.

 Assume that the initial conditions describe a universe, whose
volume is decreasing. In the variables $\dot{u}_1,\,\dot{u}_2$ and
$\dot{u}_3$, we have
\begin{proposition}
The dynamics of the universe which shrinks with $\ta$ takes place
in the lower interior of the cone
\be\label{l-i}
3\dot{u}_1^2-\dot{u}_2^2-3\dot{u}_3^2>0,\quad \dot{u}_1<0.
\ee
\end{proposition}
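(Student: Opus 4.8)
The plan is to read the sign of the kinetic energy straight off the Hamiltonian constraint, and then to pin down which nappe of the cone the trajectory occupies using the shrinking hypothesis together with a continuity argument. No differential equation needs to be solved; everything follows from the algebraic structure of \eqref{Lagrequcons}.

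First I would rewrite the constraint \eqref{Lagrequcons} as
\[
3\dot{u}_1^2-\dot{u}_2^2-3\dot{u}_3^2 = e^{2(u_1-u_2-u_3)}+2e^{u_2}\cosh(3u_3).
\]
Every term on the right is manifestly positive (the exponentials are positive and $\cosh(3u_3)\ge 1$), so the kinetic energy $E_k$ of \eqref{quadrics} is strictly positive for \emph{every} $\ta$, not merely at $\ta=0$. This is the sharpened form of the remark after \eqref{consx} that the kinetic energy is positive, and it immediately yields the first inequality in \eqref{l-i} along the whole trajectory: the velocity vector lies strictly inside the cone.

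Next I would fix the sheet. Since $E_k>0$ gives $3\dot{u}_1^2>\dot{u}_2^2+3\dot{u}_3^2\ge 0$, the component $\dot{u}_1$ can never vanish; being a continuous function of $\ta$, it therefore keeps a constant sign for all $\ta$. To identify that sign I would use $u_1=\tfrac13\ln(abc)$ from \eqref{u2abc}: since the directional scale factors are proportional to squares of the length scales, the volume scale is proportional to $\sqrt{abc}$, so $\dot{u}_1$ is a fixed positive multiple of $d\ln V/d\ta$. The shrinking hypothesis $dV/d\ta|_{\ta=0}<0$ then forces $\dot{u}_1|_{\ta=0}<0$, and combined with the constant-sign conclusion we obtain $\dot{u}_1<0$ throughout, which is the second inequality in \eqref{l-i}.

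The only points requiring care are the translation of ``decreasing volume'' into $\dot{u}_1<0$, which is pure bookkeeping with \eqref{u2abc}, and the constant-sign step, which rests on the strict positivity of $E_k$ preventing the trajectory from ever touching the cone and thus from crossing between nappes. I do not expect a genuine analytic obstacle here: the entire result is carried by the definite sign structure of the constraint, so the proof should be only a few lines.
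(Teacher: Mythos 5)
Your proposal is correct and follows essentially the same route as the paper: the first inequality is read off the sign-definite right-hand side of the constraint \eqref{Lagrequcons}, and the second is the translation of the shrinking hypothesis via $u_1=\tfrac13\ln(abc)$ in \eqref{u2abc}. Your explicit continuity step (strict positivity of $E_k$ forbids $\dot{u}_1$ from vanishing, so its initial negative sign persists for all $\ta$) is a detail the paper leaves implicit, and it is a worthwhile tightening rather than a different method.
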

\begin{proof}
The first inequality (\textit{interior}) follows from the
constraint \eqref{Lagrequcons}, from which
$3\dot{u}_1^2-\dot{u}_2^2-3\dot{u}_3^2$ is equal to a sum of
exponential functions and hence it is positive. The second
(\textit{lower}, i.e. $\dot{u}_1<0$), is equivalent to the
assumption that the volume scale is decreasing, by the first
equation of \eqref{u2abc}.
\end{proof}

\begin{figure}
\begin{center}
\vspace{-2.5cm}
\includegraphics[width=1.2\textwidth]{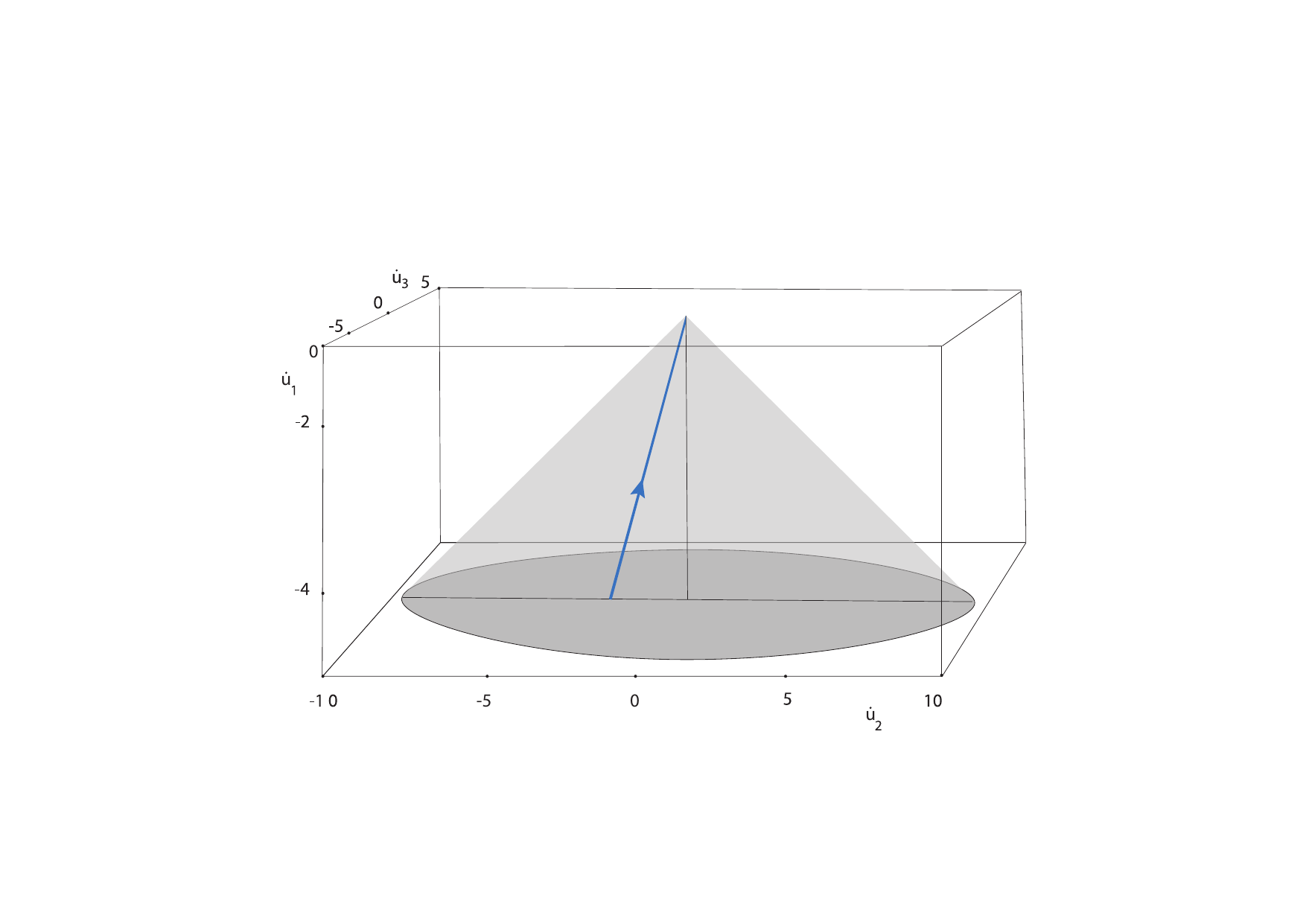}\vspace{-1.8cm}
\vspace{0.1cm} \caption{ (from \cite{GP}). The lower half (=
universe shrinking with $\ta$) of the cone
$3\dot{u}_1^2-\dot{u}_2^2-3\dot{u}_3^2
> 0$. The dynamics of the system takes place inside the cone.
The line with the arrow shows the exact solution; the arrow
indicates its direction of evolution. For $\ta\to\infty$, the line
tends to the apex of the cone. \\A position in the cone, together
with the tangent to the trajectory, provide complete information
on $u_1, u_2, u_3$, and their derivatives.\label{ccone1}}
\end{center}
\end{figure}

The following properties make the cone of kinetic energy a
particularly useful tool, reproducing essential information that
phase diagrams provide for single functions:
\begin{proposition}\label{singular}
Zero of the kinetic energy, i.e. the conical surface
$3\dot{u}_1^2-\dot{u}_2^2-3\dot{u}_3^2=0$ is a singular manifold
of the solutions.
\end{proposition}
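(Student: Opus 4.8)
The plan is to exploit the constraint \eqref{Lagrequcons}, which ties the kinetic energy to the manifestly positive potential energy, and then translate the condition ``$E_k=0$'' back into a statement about the scale factors $a,\,b,\,c$. First I would note that on \emph{every} solution the constraint forces
\be
E_k=3\dot{u}_1^2-\dot{u}_2^2-3\dot{u}_3^2=e^{2(u_1-u_2-u_3)}+2e^{u_2}\cosh(3u_3),
\ee
so that the value of the kinetic energy coincides with the potential energy. Using \eqref{u2abc} together with \eqref{y2u} and \eqref{a2qrs}, I would rewrite this right-hand side in the original variables as $a^2+b/a+c/b=q+r+s$, a sum of three strictly positive terms.

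The second step is to observe that such a sum vanishes if and only if each term vanishes separately, which, the exponentials being positive at every finite configuration, can occur only in the limit $u_1-u_2-u_3\to-\infty$ and $u_2\to-\infty$, i.e. $a^2\to 0$, $b/a\to 0$, $c/b\to 0$. Because $b=(b/a)\,a$ and $c=(c/b)(b/a)\,a$, these three limits force $a,\,b,\,c\to 0$ simultaneously, so the conical surface $E_k=0$ is exactly the image, in velocity space, of the total collapse of all three length scales, i.e. of the cosmological singularity. No regular state of the universe, with finite and nonzero scale factors, has its velocity vector on the cone: by the preceding Proposition the velocity stays strictly in the interior $E_k>0$, and the cone is reached only asymptotically, as the solution itself degenerates. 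This is the precise sense in which $E_k=0$ is a singular manifold of the solutions.

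The step I expect to require the most care is the logical equivalence in the second paragraph, namely that the cone is attained only as a limit and never at an interior point. I would therefore be explicit that ``$E_k=0$'' is reached as the arguments of the exponentials tend to $-\infty$, rather than at any finite $(u_1,u_2,u_3)$, and that this asymptotic limit coincides with the vanishing of $a,\,b,\,c$. I would also record that the collapse is approached as $\ta\to\infty$, consistent with Figure \ref{ccone1} and with Proposition \ref{g_i=0}, so that the identification of the cone with the physical singularity is unambiguous.
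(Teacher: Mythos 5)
Your proof is correct and follows essentially the same route as the paper: both invoke the constraint \eqref{Lagrequcons} to identify $E_k$ with the sum of positive exponential terms, so that $E_k=0$ can hold only when every exponent tends to $-\infty$, i.e.\ only as the solution itself degenerates, which is exactly the paper's argument. Your additions are elaborations rather than a different method---translating the degeneration into $a,\,b,\,c\to 0$ via \eqref{a2qrs} (the paper stops at ``at least $u_1$ and $u_2$ tend to $-\infty$'')---plus one terminological slip worth fixing: in the paper's sign convention $E_k$ equals \emph{minus} the potential energy, since the potential energy is negative.
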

\begin{proof}
From the constraint \eqref{Lagrequcons}, if the kinetic energy
$E_k$ turns to zero, then the sum of exponential functions (the
minus potential energy, $-E_p$) also has to be zero, whence all
exponents in \eqref{y2u} tend to $-\infty$ on approach to the
surface (including its apex). This requires that at least $u_1$
and $u_2$ tend to $-\infty$.
\end{proof}
\begin{corollary}
Bearing in mind that the accelerations $\ddot{u}_i,~i=1,\,2,\,3$,
are linear combinations of these exponential functions (see
\eqref{Lagrequ}), we see that all of them tend to $0$, i.e., the
rate of approach to the singular surface slows down to zero.
\end{corollary}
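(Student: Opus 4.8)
The plan is to express all three accelerations of \eqref{Lagrequ} as real linear combinations of the very exponentials that constitute the potential energy, and then to invoke Proposition~\ref{singular} to send each of those exponentials to zero on approach to the conical surface. The only preliminary observation needed is that the $\cosh$ and $\sinh$ appearing in \eqref{Lagrequb} and \eqref{Lagrequc} are themselves assembled from the potential-energy exponentials.

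First I would rewrite the hyperbolic terms as $e^{u_2}\cosh(3u_3)=\tfrac12\big(e^{u_2+3u_3}+e^{u_2-3u_3}\big)$ and $e^{u_2}\sinh(3u_3)=\tfrac12\big(e^{u_2+3u_3}-e^{u_2-3u_3}\big)$. In the variables of \eqref{y2u} these are precisely the exponentials $e^{y_2}$ and $e^{y_3}$, while the remaining contribution is $e^{2(u_1-u_2-u_3)}=e^{y_1}$. Substituting back into \eqref{Lagrequa}--\eqref{Lagrequc}, each acceleration $\ddot{u}_i$ becomes a fixed real linear combination of the three nonnegative quantities $e^{y_1},\,e^{y_2},\,e^{y_3}$, with no other functional dependence.

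Next I would combine the constraint \eqref{Lagrequcons} with Proposition~\ref{singular}. The constraint identifies the kinetic energy with $e^{y_1}+e^{y_2}+e^{y_3}$, a sum of three nonnegative terms. By Proposition~\ref{singular}, approaching the conical surface forces this kinetic energy to zero; since a sum of nonnegative quantities can vanish only when each summand does, I obtain $e^{y_1},\,e^{y_2},\,e^{y_3}\to 0$ separately. Any fixed linear combination of quantities tending to zero tends to zero, so all three accelerations $\ddot{u}_i\to 0$. Because the accelerations are the rate of change of the velocities, their vanishing means the velocity vector becomes asymptotically constant, i.e. the approach to the singular surface decelerates to a standstill in the velocity variables, which is the stated interpretation.

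There is essentially no obstacle: the entire content is the recognition that the equations of motion \eqref{Lagrequ} and the potential in \eqref{Lagrequcons} are built from one and the same triple of exponentials. The single point deserving a moment's care is the $\sinh$ term in $\ddot{u}_3$, which is not individually a potential-energy summand; expanding it into $e^{y_2}$ and $e^{y_3}$ as above dispatches it cleanly, and as an alternative one may simply note the elementary bound $|e^{u_2}\sinh(3u_3)|\le e^{u_2}\cosh(3u_3)\to 0$.
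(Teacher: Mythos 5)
Your proposal is correct and follows essentially the same route as the paper: the corollary there is exactly the observation that the accelerations in \eqref{Lagrequ} are linear combinations of the potential-energy exponentials, which by the constraint \eqref{Lagrequcons} and Proposition~\ref{singular} must each vanish as the conical surface is approached. Your explicit expansion of $e^{u_2}\cosh(3u_3)$ and $e^{u_2}\sinh(3u_3)$ into $\tfrac12\bigl(e^{y_2}\pm e^{y_3}\bigr)$ merely spells out the decomposition the paper leaves implicit.
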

Later, we will see that the trajectory of the system gets
reflected from one of the hyperboloid surfaces \eqref{quadrics}
before attaining the singularity, with the exception of the exact
solution \eqref{sol-u}. With this exception, the conical surface
is never reached, even in the limit $\ta\to\infty$.

\begin{proposition}\label{complete}
The position of the system in the cone, together with the
direction of the tangent to the trajectory, provide complete
information on the local values of $u_1,\,u_2,\, u_3$ and their
$\ta$ derivatives.
\end{proposition}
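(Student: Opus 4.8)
The plan is to show that the six local quantities $u_1,u_2,u_3,\dot u_1,\dot u_2,\dot u_3$ can all be reconstructed from the two data named in the statement. First I would note that ``the position of the system in the cone'' is by definition a point $(\dot u_1,\dot u_2,\dot u_3)$ of the velocity space, so the three velocities (the $\ta$ derivatives) are read off immediately and the whole problem reduces to recovering the three positions $u_1,u_2,u_3$. I would also record that, since the trajectory is parametrised by $\ta$ and drawn in velocity space, its tangent vector is $\tfrac{d}{d\ta}(\dot u_1,\dot u_2,\dot u_3)=(\ddot u_1,\ddot u_2,\ddot u_3)$; hence the ``direction of the tangent'' is precisely the direction of the acceleration vector.

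The next step is to invert the dynamics. Writing $A=e^{2(u_1-u_2-u_3)}$, $B=e^{u_2+3u_3}$, $C=e^{u_2-3u_3}$ (so that $A=e^{y_1}$, $B=e^{y_2}$, $C=e^{y_3}$ by \eqref{y2u}) and using $2e^{u_2}\cosh(3u_3)=B+C$, $2e^{u_2}\sinh(3u_3)=B-C$, the Lagrange equations \eqref{Lagrequa}--\eqref{Lagrequc} turn into a linear system expressing $(\ddot u_1,\ddot u_2,\ddot u_3)$ in terms of $(A,B,C)$. This linear map has determinant $-\tfrac16\neq0$, so it is invertible and yields $A,B,C$ as explicit linear combinations of the accelerations. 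Knowing the positive numbers $A,B,C$ would then give $y_1,y_2,y_3$ by taking logarithms, and finally $u_1,u_2,u_3$ by inverting the elementary relations \eqref{y2u}.

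The one obstacle is that the tangent supplies only the \emph{direction} of $(\ddot u_1,\ddot u_2,\ddot u_3)$, i.e.\ the ratios $\ddot u_1:\ddot u_2:\ddot u_3$, and hence only the ratios $A:B:C$; an overall positive scale is left free. This is exactly where the constraint enters, and I expect it to be the crux. The constraint \eqref{Lagrequcons} reads $3\dot u_1^2-\dot u_2^2-3\dot u_3^2=A+B+C$, and its left-hand side is already known from the position in velocity space. Thus the sum $A+B+C$ is fixed, and together with the ratios $A:B:C$ it determines each of the positive numbers $A,B,C$ uniquely; a short computation of the inverse map and of the common scale then completes the reconstruction. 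The only point requiring care is to stay away from the singular conical surface of Proposition \ref{singular}, where $E_k\to0$ forces $A,B,C\to0$ and the logarithms diverge; in the interior of the cone the kinetic energy is strictly positive and the construction is well defined.
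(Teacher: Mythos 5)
Your proof is correct and takes essentially the same route as the paper's: velocities read off the position, acceleration direction from the tangent, the constraint used to fix the overall scale of the acceleration, and inversion of the Lagrange equations (via the exponentials) to recover $u_1,\,u_2,\,u_3$. In fact your scale-fixing relation $A+B+C=3\dot{u}_1^2-\dot{u}_2^2-3\dot{u}_3^2$ is exactly the paper's equation \eqref{length} in disguise, since $9\ddot{u}_1-2\ddot{u}_2=A+B+C$; your symmetric formulation merely has the small advantage of making the paper's exceptional case $\ddot{u}_2/\ddot{u}_1=9/2$ manifestly impossible in the interior of the cone, where $E_k>0$.
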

\begin{proof}
The Cartesian coordinates of the position in the cone are the
components of the velocity, $\dot{u}_1,\,\dot{u}_2$ and
$\dot{u}_3$. The direction of the tangent yields proportions
between the components of the acceleration
$\ddot{u}_1,\,\ddot{u}_2$ and $\ddot{u}_3$. Given the components
of the velocity, the length of the acceleration vector can be
retrieved from
\be\label{length}
2\ddot{u}_2-9\ddot{u}_1+3\dot{u}_1^2-\dot{u}_2^2-3\dot{u}_3^2=0,
\ee
which is a simple linear combination of equations
\eqref{Lagrequa}, \eqref{Lagrequb} and \eqref{Lagrequcons} (the
exceptional case which would not yield the length is
$\ddot{u}_2/\ddot{u}_1=9/2$, but this might happen only on the
conical surface). Having the accelerations, we can calculate the
values of $u_1,\,u_2$ and $u_3$ by solving the system
\eqref{Lagrequa}, \eqref{Lagrequb},
\eqref{Lagrequc} for these variables.\\
By differentiation of these equations, we can obtain higher
derivatives of $u_i, ~~ i=1,2,3$, if they exist.
\end{proof}
\begin{remark}
Consequently, this means that the aforementioned data uniquely
determine the scale factors $a,\,b$ and $c$ and their derivatives
(by inversion of \eqref{u2abc}).
\end{remark}
\begin{proposition}\label{limits-dot}
Each of the velocities, $\dot{u}_1,\,\dot{u}_2$ and $\dot{u}_3$,
has a finite limit as $\ta\to\infty$.
\end{proposition}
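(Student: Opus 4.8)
The plan is to treat $\dot{u}_1$ separately from $\dot{u}_2,\dot{u}_3$, since the volume direction behaves much more simply. First I would observe that the right-hand side of \eqref{Lagrequa} is a single positive exponential, so $\ddot{u}_1=\tfrac13 e^{2(u_1-u_2-u_3)}>0$ and $\dot{u}_1$ is strictly increasing. The constraint \eqref{Lagrequcons} gives $3\dot{u}_1^2=\dot{u}_2^2+3\dot{u}_3^2+e^{2(u_1-u_2-u_3)}+2e^{u_2}\cosh(3u_3)\ge e^{2(u_1-u_2-u_3)}>0$, so $\dot{u}_1$ never vanishes; since it starts negative in the lower interior \eqref{l-i}, it stays negative by continuity. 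An increasing function bounded above by $0$ converges, so $\dot{u}_1$ has a finite limit $L_1\le 0$. In particular $\dot{u}_1$ is bounded, and the constraint then forces $\dot{u}_2^2+3\dot{u}_3^2\le 3\dot{u}_1^2$ to be bounded, so $\dot{u}_2$ and $\dot{u}_3$ are bounded as well, though boundedness alone is not yet convergence.

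The key idea for the remaining two velocities is to show that every potential term is integrable on a half-line $(\tau_0,\infty)$, because all three accelerations \eqref{Lagrequa}--\eqref{Lagrequc} are fixed linear combinations of the positive exponentials $E_1\colonequals e^{2(u_1-u_2-u_3)}$, $E_2\colonequals e^{u_2+3u_3}$ and $E_3\colonequals e^{u_2-3u_3}$. Convergence of $\dot{u}_1$ already yields $\int^{\infty}\ddot{u}_1\,d\ta<\infty$, i.e. $\int^{\infty}E_1\,d\ta<\infty$. Next I would use $E_1=3\ddot{u}_1$ to rewrite \eqref{Lagrequb} as $\tfrac12(E_2+E_3)=3\ddot{u}_1-\ddot{u}_2$ and integrate from $\tau_0$ to $T$: the right-hand side equals $3[\dot{u}_1(T)-\dot{u}_1(\tau_0)]-[\dot{u}_2(T)-\dot{u}_2(\tau_0)]$, which stays bounded as $T\to\infty$ because $\dot{u}_1$ converges and $\dot{u}_2$ is bounded. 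Since $E_2+E_3\ge 0$, the left-hand integral is monotone in $T$ and bounded above, hence $\int^{\infty}(E_2+E_3)\,d\ta<\infty$; as both integrands are positive, $\int^{\infty}E_2\,d\ta$ and $\int^{\infty}E_3\,d\ta$ are finite separately.

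With all three of $\int^{\infty}E_i\,d\ta$ finite, the conclusion is immediate. The right-hand sides of \eqref{Lagrequb} and \eqref{Lagrequc} are fixed linear combinations of $E_1,E_2,E_3$, so $\ddot{u}_2$ and $\ddot{u}_3$ are absolutely integrable on $(\tau_0,\infty)$. Writing $\dot{u}_k(T)=\dot{u}_k(\tau_0)+\int_{\tau_0}^{T}\ddot{u}_k\,d\ta$ for $k=2,3$ then shows each has a finite limit, which together with the limit $L_1$ already found for $\dot{u}_1$ establishes the statement.

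I would expect the only delicate point to be establishing the \emph{integrability} of the potential terms rather than their mere boundedness; the trick that makes it work is that the volume acceleration $\ddot{u}_1$ is a single positive exponential, so controlling $\dot{u}_1$ simultaneously controls $\int^\infty E_1\,d\ta$ and, via \eqref{Lagrequb}, the remaining positive combination $E_2+E_3$. One should check that no sign cancellation is being exploited illegitimately: it is not, since the decisive estimates only ever integrate manifestly nonnegative quantities, and the lower-cone hypothesis \eqref{l-i} is genuinely needed, as it pins the sign of $\dot{u}_1$ and hence its monotone convergence.
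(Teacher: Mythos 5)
Your proof is correct, and it takes a genuinely different route than the paper's, although both rest on the same two pillars: positivity of the exponential (potential) terms and the cone bound $\lvert\dot{u}_2\rvert,\,\lvert\dot{u}_3\rvert\le\sqrt{3}\,\lvert\dot{u}_1\rvert\le\sqrt{3}\,\lvert\dot{u}_1(0)\rvert$. Writing $E_1=e^{2(u_1-u_2-u_3)}$, $E_2=e^{u_2+3u_3}$, $E_3=e^{u_2-3u_3}$ as you do, the paper recombines the Lagrange equations \eqref{Lagrequ} into \eqref{for-exp}, i.e. $3\ddot{u}_1=E_1$, $4\ddot{u}_1-\ddot{u}_2-\ddot{u}_3=E_2$, $2\ddot{u}_1-\ddot{u}_2+\ddot{u}_3=E_3$, so that three velocity combinations are increasing and bounded; it then applies monotone convergence three times and inverts the nonsingular linear map to extract the limits of $\dot{u}_1,\dot{u}_2,\dot{u}_3$. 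You invoke monotone convergence only once (for $\dot{u}_1$) and replace the other two applications by an $L^1$ argument: $\int^{\infty}E_1\,d\ta<\infty$ from the convergence of $\dot{u}_1$; then $\int^{\infty}(E_2+E_3)\,d\ta<\infty$ from the identity $3\ddot{u}_1-\ddot{u}_2=\tfrac12(E_2+E_3)$ --- which is precisely the average of the paper's second and third combinations --- together with boundedness of $\dot{u}_2$; and finally absolute integrability of $\ddot{u}_2$ and $\ddot{u}_3$, which gives convergence of their primitives. What your version buys is an explicit byproduct, $E_1,E_2,E_3\in L^1(\ta_0,\infty)$, a quantitative decay statement about the potential that the paper never states and that is consistent with the trajectory's later-discussed approach to the zero-potential conical surface; you also make explicit a point the paper leaves implicit, namely that the constraint \eqref{Lagrequcons} forces $3\dot{u}_1^2\ge E_1>0$, so $\dot{u}_1$ cannot vanish and ``increasing and initially negative'' genuinely yields $\lvert\dot{u}_1(\ta)\rvert\le\lvert\dot{u}_1(0)\rvert$. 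What the paper's version buys is symmetry and brevity: three parallel monotone-convergence steps with no integral bookkeeping beyond the fundamental theorem of calculus.
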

\begin{proof}
A simple, reversible linear transformation turns the dynamics
equations \eqref{Lagrequa}, \eqref{Lagrequb}, \eqref{Lagrequc},
into
\bs\label{for-exp}
\begin{align}
3\ddot{u}_1=&e^{2(u_1-u_2-u_3)},\label{for-expa}\\
4\ddot{u}_1-\ddot{u}_2-\ddot{u}_3=&e^{u_2+3u_3}\\
2\ddot{u}_1-\ddot{u}_2+\ddot{u}_3=&e^{u_2-3u_3}
\end{align}
\es
The r.h.s. of the resulting equations are exponential functions
and hence they are positive, whence their l.h.s. are second
derivatives of convex functions and first derivatives, of
increasing functions, $\dot{u}_1,~4\dot{u}_1-\dot{u}_2-\dot{u}_3$
and $2\dot{u}_1-\dot{u}_2+\dot{u}_3$, respectively. The latter
functions are bounded, because the increasing property of
$\dot{u}_1$, together with $\dot{u}_1(0)<0$, infer
$\lvert\dot{u}_1(t)\rvert<\lvert\dot{u}_1(0)\rvert$, while both
$\lvert\dot{u}_2(t)\rvert$ and $\lvert\dot{u}_3(t)\rvert$ are not
greater than $\sqrt{3}\lvert\dot{u}_1(t)\rvert$ as long as we are
inside the cone. Hence, all three linear combinations of the first
derivatives are increasing functions bounded from above, and thus
have finite limits.

The limits $g_1\!\colonequals \lim_{\ta\to\infty}\dot{u}_1,~
g_2\!\colonequals \lim_{\ta\to\infty}\dot{u}_2$ and
$g_3\!\colonequals \lim_{\ta\to\infty}\dot{u}_3$ may be uniquely
regained from the limits of these linear combinations by inverting
the transformation which leads from \eqref{Lagrequ} to
\eqref{for-exp}.
\end{proof}
\begin{corollary}
As $\dot{y}_i,~~i=1,2,3$, are linear combinations of $\dot{u}_i$
\eqref{y2u}, they also have finite limits. Moreover, as the
corresponding linear transformation is nonsingular, all
$\dot{y}_i$ vanish at the apex and only at the apex. Also, for all
$i$, $y_i\to -\infty$ for $\ta\to\infty$, which is a consequence
of the constraint \eqref{consy}.
\end{corollary}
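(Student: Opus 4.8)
The plan is to dispatch the three assertions in turn, leaning on Proposition~\ref{limits-dot} for the existence of the finite velocity limits $g_i=\lim_{\ta\to\infty}\dot u_i$ and on the constraint \eqref{consy} for the behaviour of the exponents. For the first assertion I would simply differentiate the linear relations \eqref{y2u},
\be
\dot y_1=2(\dot u_1-\dot u_2-\dot u_3),\quad \dot y_2=\dot u_2+3\dot u_3,\quad \dot y_3=\dot u_2-3\dot u_3,
\ee
and read off that each $\dot y_i$ tends to the corresponding fixed linear combination of $g_1,g_2,g_3$; a convergent quantity cannot have an infinite limit, so all three $\dot y_i$ converge. For the second assertion I would note that the coefficient matrix of the map $(\dot u_1,\dot u_2,\dot u_3)\mapsto(\dot y_1,\dot y_2,\dot y_3)$ has determinant $-12\neq0$, hence is a linear isomorphism. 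The apex of the cone is by definition the origin $\dot u_1=\dot u_2=\dot u_3=0$ of velocity space, where every $\dot y_i$ plainly vanishes; conversely $\dot y_1=\dot y_2=\dot y_3=0$ forces $\dot u_1=\dot u_2=\dot u_3=0$ by invertibility, so the $\dot y_i$ vanish exactly at the apex.

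The third assertion is the substantive one. Here the constraint \eqref{consy} gives $e^{y_1}+e^{y_2}+e^{y_3}=E_k$, a strictly positive quantity which, by Proposition~\ref{limits-dot} and continuity, converges to the finite value $E_k^\infty=3g_1^2-g_2^2-3g_3^2\ge0$. It therefore suffices to establish $E_k^\infty=0$, since a sum of three positive terms tending to zero forces each term to zero, i.e. $y_i\to-\infty$. To pin down $E_k^\infty=0$ I would invoke the monotone form \eqref{for-exp}: the combinations $3\dot u_1$, $4\dot u_1-\dot u_2-\dot u_3$ and $2\dot u_1-\dot u_2+\dot u_3$ are increasing (their derivatives are the positive exponentials $e^{y_1},e^{y_2},e^{y_3}$) and, being linear combinations of the convergent $\dot u_i$, are bounded. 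Consequently each $\int_0^\infty e^{y_i}\,d\ta$ is finite, so $\int_0^\infty E_k\,d\ta<\infty$; a nonnegative continuous function with a finite limit and a finite integral over $[0,\infty)$ must have limit zero, whence $E_k^\infty=0$.

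The main obstacle is precisely this last deduction. The constraint by itself only equates the kinetic energy with the positive potential $\sum_i e^{y_i}$ and guarantees a finite limit; it does not rule out that limit being a positive constant, which would correspond to the velocity settling strictly inside the cone. The monotonicity supplied by \eqref{for-exp}, converting convergence of the velocities into integrability of the exponentials, is what closes this gap. I would present that integrability step carefully, or equivalently argue by Barbalat-type reasoning, using that each $e^{y_i}$ is bounded (by the constraint, $e^{y_i}\le E_k\le 3\dot u_1^2$) with bounded derivative $\dot y_i e^{y_i}$, so that a uniformly continuous integrable function must vanish at infinity. This is the only part of the proof that is not a one-line consequence of the earlier propositions.
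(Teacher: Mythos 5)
Your proof is correct. The first two assertions you handle exactly as the paper does: finite limits follow from Proposition \ref{limits-dot} because the map \eqref{y2u} is linear, and the vanishing of all $\dot y_i$ precisely at the apex follows from nonsingularity (your determinant $-12$ is right). The difference lies in the third assertion, which is the only substantive one. The paper offers no explicit argument beyond the phrase ``a consequence of the constraint \eqref{consy}''; implicitly it leans on the neighboring results, Proposition \ref{singular} and Proposition \ref{endoftraj}: every trajectory tends, as $\ta\to\infty$, to the lateral surface or apex of the cone (proved there via the time integral \eqref{time2}), the kinetic form therefore tends to zero, and the constraint then forces the sum of exponentials, hence each $e^{y_i}$, to zero. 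You instead close the gap internally: from the monotone, bounded combinations \eqref{for-exp} you obtain $\int_0^\infty e^{y_i}\,d\ta<\infty$ for each $i$, and since $E_k=\sum_i e^{y_i}$ has a finite limit (continuity in the convergent velocities), integrability forces that limit to be zero, whence $y_i\to-\infty$. This is a genuinely different, and arguably tighter, route: it does not presuppose Proposition \ref{endoftraj}, which in the paper is proved only \emph{after} this corollary, and your conclusion $E_k\to 0$ is in fact an independent proof that trajectories end on the cone's boundary. What the paper's route buys is the dynamical picture (interior points are reached in finite $\ta$, the boundary only as $\ta\to\infty$); what yours buys is logical self-containment at the point where the corollary is stated, together with the explicit Barbalat-type justification (boundedness of $e^{y_i}\le E_k\le 3\dot u_1^2$ and of $\dot y_i$, hence uniform continuity) that the paper leaves unsaid.
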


\begin{proposition}\label{infinite}
A trajectory which ends on the surface or at the apex of the cone,
needs infinite time $\ta$ to reach it.
\end{proposition}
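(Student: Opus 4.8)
The plan is to combine two observations: approaching the cone forces the logarithmic scale variables $u_i$ to diverge to $-\infty$, whereas the velocities remain uniformly bounded throughout the motion, so that such a divergence cannot be completed in finite $\ta$.

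First I would establish the uniform bound on the velocities. The motion takes place in the lower interior of the cone (cf.\ \eqref{l-i}), so $\dot u_1 < 0$; since $3\ddot u_1 = e^{2(u_1-u_2-u_3)} > 0$ by \eqref{for-expa}, the velocity $\dot u_1$ is strictly increasing and therefore trapped in $[\dot u_1(0),0)$, giving $|\dot u_1| \le |\dot u_1(0)|$ for all $\ta$. The interior inequality $3\dot u_1^2 \ge \dot u_2^2 + 3\dot u_3^2$ of \eqref{l-i} then yields $|\dot u_2| \le \sqrt 3\,|\dot u_1|$ and $|\dot u_3| \le |\dot u_1|$, so with the single constant $C = \sqrt 3\,|\dot u_1(0)|$ one has $|\dot u_i(\ta)| \le C$ for $i=1,2,3$ along the entire trajectory.

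Next I would convert this into a finite-speed estimate for the positions. For any finite $\ta>0$, $|u_i(\ta)-u_i(0)| \le \int_0^{\ta}|\dot u_i|\,d\ta' \le C\,\ta$, so each $u_i$ stays finite on every bounded $\ta$-interval and, in particular, cannot reach $-\infty$ in finite time.

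It remains to recall why reaching the cone demands precisely such a divergence. By the constraint \eqref{Lagrequcons} the kinetic energy equals the (strictly positive) sum of exponentials, so the cone $E_k=0$ is never attained at finite $u_i$ and ``ending on it'' is necessarily a limiting statement. As in Proposition \ref{singular}, $E_k\to 0$ forces that sum to vanish, hence all three exponents in \eqref{y2u} tend to $-\infty$ and in particular $u_1,u_2\to-\infty$ (the apex being merely the special point of the surface at which additionally all $\dot u_i\to 0$). Comparing this with the finite-speed bound of the previous paragraph shows the limit can be reached only as $\ta\to\infty$. The one step needing care is the uniform bound on $\dot u_1$, which relies on the trajectory never leaving the lower half of the cone; this is guaranteed by the lower-interior property \eqref{l-i} together with the monotonicity $\ddot u_1>0$. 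Once that is in place the argument is an elementary Lipschitz estimate, so I do not anticipate a genuine obstacle.
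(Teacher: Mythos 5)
Your proof is correct and follows essentially the same route as the paper: both rest on the monotonicity $\ddot u_1>0$ giving the uniform bound $0>\dot u_1\ge\dot u_1(0)$, combined with the fact (Proposition \ref{singular}) that reaching the surface or apex forces $u_1\to-\infty$. The paper packages this as the integral $\ta=\int_{u_1(0)}^{u_1}du_1'/\dot u_1'$ with integrand bounded away from zero over an infinite interval, which is the same finite-speed estimate you write as $|u_1(\ta)-u_1(0)|\le|\dot u_1(0)|\,\ta$; your additional bounds on $\dot u_2,\dot u_3$ are harmless but not needed.
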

\begin{proof}
Consider a trajectory beginning in the lower half of the cone and
ending on its surface or apex. Since, $\dot{u}_1<0$, hence $u_1$
is a decreasing function of $\ta$. On this basis, $\ta$ may be
calculated as
\be\label{time1}
\ta=\int_{u_1(0)}^{u_1}du_1'/\dot{u}_1'
\ee
 From \eqref{Lagrequa} and \eqref{l-i}, we have
$0>\dot{u}_1(t)\ge\dot{u}_1(0)$ in the lower interior of the cone,
whence $1/\dot{u}_1(t)\le\dot{u}_1(0)<0$. Hence the integrand
$1/\dot{u}_1'$ is separated from $0$ in the interval of
integration. On the other hand, $u_1\to -\infty$ when we approach
the boundary (see the proof of Proposition \ref{singular}). The
integral \eqref{time1} in the limit $u_1\to -\infty$ extends over
infinite interval, while its integrand is separated from zero.
Hence, it is infinite.
\end{proof}
\begin{remark}
The time parameter $\ta$ calculated in \eqref{time1}, over a
finite or infinite interval, is always positive, because the
integrand is negative, while the lower limit of integration is
greater than the upper limit.
\end{remark}

\begin{proposition}\label{endoftraj} $\mathrm{(converse~of~Proposition~
\ref{infinite})}$
\\ In the limit $\ta\to \infty$, each trajectory
reaches the surface or apex of the cone.
\end{proposition}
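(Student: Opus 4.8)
The plan is to combine two facts already in hand---that the velocity vector converges to a finite point as $\ta\to\infty$, and that the kinetic energy must vanish in that limit---and observe that together they force the limit point to lie on the zero surface of the kinetic quadratic form, which is exactly the conical surface (or its apex). No new analytic machinery should be needed; the proposition is essentially a bookkeeping consequence of Proposition \ref{limits-dot} and its Corollary.

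First I would invoke Proposition \ref{limits-dot}, which guarantees that $\dot u_1,\dot u_2,\dot u_3$ have finite limits $g_1,g_2,g_3$ as $\ta\to\infty$. Geometrically this says that the trajectory in velocity space converges to a definite point $P=(g_1,g_2,g_3)$, and boundedness of the velocities inside the cone (also established in the proof of Proposition \ref{limits-dot}) ensures that $P$ is a genuine finite point, not a direction escaping to infinity.

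Next I would rewrite the constraint \eqref{Lagrequcons} in exponential form. Using $2e^{u_2}\cosh(3u_3)=e^{u_2+3u_3}+e^{u_2-3u_3}$ together with the substitution \eqref{y2u}, the constraint reads
\be
E_k=3\dot u_1^2-\dot u_2^2-3\dot u_3^2=e^{\,y_1}+e^{\,y_2}+e^{\,y_3}.
\ee
By the Corollary following Proposition \ref{limits-dot}, all $y_i\to-\infty$ as $\ta\to\infty$, so each exponential on the right tends to $0$. Hence $E_k\to 0$, and by continuity of the quadratic form at the finite point $P$ we obtain $3g_1^2-g_2^2-3g_3^2=0$. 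This is precisely the statement that $P$ lies on the conical surface \eqref{quadrics} with $\eps=0$: if $g_1=g_2=g_3=0$ the limit point is the apex, and otherwise it is an ordinary point of the surface.

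I expect no genuinely hard step, since the analytic work was already carried out in Proposition \ref{limits-dot}. The one point that deserves care is logical rather than computational: both ingredients must be used at once. Convergence of the velocities alone would permit a limit strictly inside the cone, whereas $E_k\to 0$ alone would permit the trajectory to drift along the surface without settling at a point. It is the combination---a \emph{finite} limit point at which $E_k$ \emph{vanishes}---that pins the endpoint to the surface or apex and thereby establishes the converse of Proposition \ref{infinite}.
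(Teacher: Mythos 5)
Your argument is circular at its only substantive step. What Proposition \ref{endoftraj} asserts is precisely that the limit point of the velocities lies on the set $E_k=0$; given the constraint \eqref{Lagrequcons} (in your exponential form $E_k=e^{y_1}+e^{y_2}+e^{y_3}$, via \eqref{y2u}) and Proposition \ref{limits-dot}, this is logically equivalent to the claim you import from the Corollary, namely that all $y_i\to-\infty$. The paper never proves that claim independently: its entire justification there is the clause ``which is a consequence of the constraint \eqref{consy}'', and the constraint alone does not deliver it. Convergence of the velocities plus the constraint only forces each limit $\g_i=\lim\dot y_i$ to satisfy $\g_i\le 0$; for an index with $\g_i=0$ the variable $y_i$ could a priori stay bounded, in which case the limit point would sit strictly inside the cone with $E_k\to\eps>0$. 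Excluding exactly this possibility is the whole content of the proposition, and the fact that the paper supplies a separate, nontrivial proof of Proposition \ref{endoftraj} shows that the Corollary's last assertion was not meant to carry that weight. By invoking it you assume the conclusion rather than prove it.

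The paper's own proof is self-contained and runs differently: since $\dot u_1$ is strictly increasing by \eqref{Lagrequa}, the time parameter can be rewritten as the integral \eqref{time2}, $\ta=\int d\dot u_1'/\ddot u_1'$; at any point of the open interior the denominator $\ddot u_1=\tfrac13 e^{2(u_1-u_2-u_3)}$ is strictly positive (cf.\ Propositions \ref{singular} and \ref{complete}), so interior points are reached at finite $\ta$, and $\ta$ can diverge only where $\ddot u_1$ can vanish, i.e.\ on the lateral surface or at the apex. If you want to keep your structure, the gap can be closed without the Corollary: the proof of Proposition \ref{limits-dot} shows that $\dot u_1$, $4\dot u_1-\dot u_2-\dot u_3$ and $2\dot u_1-\dot u_2+\dot u_3$ are increasing and bounded, so by \eqref{for-exp} each of $e^{y_1},e^{y_2},e^{y_3}$ is nonnegative and integrable on $[0,\infty)$; since the velocities converge, $E_k=e^{y_1}+e^{y_2}+e^{y_3}$ also converges, and a nonnegative integrable function possessing a limit must have limit zero. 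Hence $E_k\to 0$ and the finite limit point $(g_1,g_2,g_3)$ lies on the surface or at the apex. With that insertion your argument becomes a complete proof, genuinely different from the paper's time-reparametrization argument; without it, the proposal has no proof of the one fact that matters.
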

\begin{proof}
Time $\ta$ can also be expressed as
\be\label{time2}
\ta=\int_{\dot{u}_1(0)}^{\dot{u}_1}d\dot{u}_1'/\ddot{u}_1',
\ee
because $\dot{u}_1$ is an increasing function of $\ta$ due to the
positive sign of its derivative \eqref{Lagrequa}. For any point of
the lower interior of the cone, the denominator is greater than
zero (from \eqref{Lagrequa}), whence the integrand is finite and
so are the limits of integration. Hence $\ta$ has a finite value.
Merely for $\left(\dot{u}_1,\,\dot{u}_2,\,\dot{u}_3\right)$ lying
on the surface or at the apex of the cone can $\ddot{u}_1$ vanish
and $\ta$ become infinite.
\end{proof}
\begin{corollary}
This means that points on a trajectory in the interior of the cone
are attained in a finite time.
\end{corollary}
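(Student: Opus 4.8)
The plan is to reverse the roles of $\ta$ and the velocity $\dot u_1$, mirroring the logic of Proposition \ref{infinite} but running it in the opposite direction. The structural fact I would start from is equation \eqref{Lagrequa} (equivalently \eqref{for-expa}): since $\ddot u_1=\tfrac13 e^{2(u_1-u_2-u_3)}>0$ everywhere in the interior of the cone, $\dot u_1$ is a strictly increasing function of $\ta$ and hence an admissible independent variable. This lets me rewrite the elapsed time as
\be
\ta=\int_{\dot u_1(0)}^{\dot u_1}\frac{d\dot u_1'}{\ddot u_1'},
\ee
so the whole question reduces to whether this integral is finite or divergent.

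First I would show that any interior point is reached in finite $\ta$. Fix such a point, with velocity value $\dot u_1<g_1$, where $g_1=\lim_{\ta\to\infty}\dot u_1$ is finite by Proposition \ref{limits-dot}; the interval $[\dot u_1(0),\dot u_1]$ of integration is therefore bounded. On this compact interval the denominator $\ddot u_1'=\tfrac13 e^{2(u_1-u_2-u_3)}$ is continuous and strictly positive, hence bounded below by a positive constant, so the integrand is bounded and the integral converges. Thus every interior point is attained at a finite $\ta$, and by contraposition the limit $\ta\to\infty$ can occur only where $\ddot u_1'$ degenerates, i.e. where $e^{2(u_1-u_2-u_3)}\to 0$.

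It remains to identify this degeneration with the cone itself. By Proposition \ref{limits-dot} the velocity tends to a definite point $(g_1,g_2,g_3)$ as $\ta\to\infty$, so I only have to exclude that this limit is interior. If it were, the constraint \eqref{Lagrequcons} would force the sum $e^{2(u_1-u_2-u_3)}+2e^{u_2}\cosh(3u_3)$ to approach the strictly positive value $3g_1^2-g_2^2-3g_3^2$; I would then have to deduce that the single term $e^{2(u_1-u_2-u_3)}=3\ddot u_1$ stays bounded away from $0$, which by the previous paragraph would make $\ta$ finite — a contradiction — forcing $(g_1,g_2,g_3)$ onto the surface $3\dot u_1^2-\dot u_2^2-3\dot u_3^2=0$ or its apex. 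The genuine obstacle is precisely this last deduction: the constraint controls only the \emph{sum} of the exponentials, and one must rule out the possibility that the positive limit is carried entirely by the $e^{u_2\pm 3u_3}$ terms while $e^{2(u_1-u_2-u_3)}$ decays. To close the gap I would exploit the rewritten system \eqref{for-exp}, in which $\dot u_1$, $4\dot u_1-\dot u_2-\dot u_3$ and $2\dot u_1-\dot u_2+\dot u_3$ are bounded monotone functions, so that $\int_0^\infty e^{2(u_1-u_2-u_3)}\,d\ta=3\bigl(g_1-\dot u_1(0)\bigr)$ and the analogous integrals of the other two exponentials are all finite; this integrability, combined with the near-constancy of the velocities for large $\ta$, should pin down the limiting behaviour of each exponential separately and thereby exclude an interior limit.
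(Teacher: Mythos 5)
Your proposal is correct, and its skeleton is exactly the paper's: Proposition \ref{endoftraj} is proved there by the very same change of variable \eqref{time2}, with positivity of the denominator supplied by \eqref{Lagrequa}, and the corollary is read off from that proof. The genuinely new element is your third paragraph, and it addresses a real lacuna. The paper simply asserts that only on the lateral surface or at the apex can $\ddot u_1$ vanish; that is the \emph{converse} of Proposition \ref{singular}, and, as you observe, it does not follow from the constraint alone, because \eqref{Lagrequcons} controls only the \emph{sum} of the three exponentials while $\ddot u_1$ is proportional to just one of them. Your repair via \eqref{for-exp} is the right one, and it can be tightened so that nothing tentative remains: by \eqref{for-exp} and Proposition \ref{limits-dot}, each of the three exponentials is integrable over $[0,\infty)$ (its integral equals the total increment of a bounded monotone combination of velocities, e.g. $\int_0^\infty e^{2(u_1-u_2-u_3)}\,d\ta=3\bigl(g_1-\dot u_1(0)\bigr)$); hence so is their sum. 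But by \eqref{Lagrequcons} that sum equals $E_k=3\dot u_1^2-\dot u_2^2-3\dot u_3^2$, which converges as $\ta\to\infty$ because the velocities do, and a nonnegative integrable function possessing a limit must have limit zero. Thus the limit point satisfies $3g_1^2-g_2^2-3g_3^2=0$, i.e., it lies on the cone or at its apex, and there is no need to pin down each exponential separately. (An even shorter closure: \eqref{length} gives $9\ddot u_1-2\ddot u_2=E_k$, so $E_k\to E^\infty>0$ would force $9\dot u_1-2\dot u_2$ to grow linearly, contradicting Proposition \ref{limits-dot}.) In this sense your argument is not merely equivalent to the paper's but strictly more complete than it.
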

\begin{remark}
Variable $\dot{u}_1$ (which is the logarithmic derivative of the
volume scale) might in principle replace time parameter $\ta$, as
it is an increasing function of $\ta$. Nevertheless, its use for
this purpose is limited, as its variation is very uneven (see Fig.
\ref{ui}). There are intervals of $\ta$ where the exponential
function in \eqref{Lagrequa} is close to zero and thus $\dot{u}_1$
hardly increases. This happens at each approach to the conical
surface, (i.e. when the kinetic energy decreases to nearly $0$).
Then also the other exponential components in \eqref{Lagrequ}
become very small (see Fig. 2, 3). We discuss this Kasner-type
behavior in subsection \ref{transition}.
\end{remark}

\begin{proposition}\label{regular}
Trajectories in the interior of the cone contain nonsingular,
nonzero points of solutions to \eqref{L1}, \eqref{L2}.
\end{proposition}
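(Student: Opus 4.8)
The plan is to convert the velocity-space hypothesis ``interior of the cone,'' i.e. $E_k=3\dot u_1^2-\dot u_2^2-3\dot u_3^2>0$, into the configuration-space statement that the directional scale factors $a,\,b,\,c$ are simultaneously finite and nonzero, so that the metric \eqref{metric} is regular at such a point. By \eqref{u2abc} and \eqref{a2qrs}--\eqref{qrs2y} this is the same as showing that the exponents $y_1,\,y_2,\,y_3$ of \eqref{y2u} (equivalently $u_1,\,u_2,\,u_3$) all stay finite: since $a=e^{y_1/2}$, $b=ae^{y_2}$, $c=be^{y_3}$, a scale factor blows up or collapses to zero precisely when some $y_i\to+\infty$ or $y_i\to-\infty$.

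First I would extract the upper bound from the constraint \eqref{Lagrequcons}: in the interior it reads $e^{2(u_1-u_2-u_3)}+2e^{u_2}\cosh(3u_3)=E_k$ with $E_k$ a finite positive number, so each of these positive exponentials is bounded above by $E_k$ and no $y_i$ can reach $+\infty$; this already excludes any scale factor blowing up. For the lower bound I would invoke the corollary to Proposition~\ref{endoftraj}: an interior point is reached at a \emph{finite} value of $\ta$, and, by Proposition~\ref{infinite}, the whole path $[0,\ta]$ then stays in the interior. On this finite interval the velocities are bounded exactly as in the proof of Proposition~\ref{limits-dot}, where $|\dot u_1|\le|\dot u_1(0)|$ and $|\dot u_2|,\,|\dot u_3|\le\sqrt3\,|\dot u_1|$ throughout the cone; integrating bounded velocities over a finite interval keeps $u_1,\,u_2,\,u_3$ finite, hence also the $y_i$ of \eqref{y2u}. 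Each $e^{y_i}$ is then the exponential of a finite number, i.e. strictly positive and finite, so $a,\,b,\,c$ are finite and nonzero. The same conclusion can be read off from Proposition~\ref{complete}: in the interior \eqref{length} gives $2\ddot u_2-9\ddot u_1=-E_k$, a finite value away from the exceptional direction $\ddot u_2/\ddot u_1=9/2$ that occurs only on the surface, so the accelerations are finite and \eqref{for-exp} expresses each $e^{y_i}$ as a finite positive combination of them.

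The step I expect to be the main obstacle is precisely this lower bound, namely ruling out $y_i\to-\infty$ (the vanishing of a single scale factor) at a point whose velocity still lies strictly inside the cone. The constraint controls only the \emph{sum} of the exponentials and hence bounds the $y_i$ from above, so it cannot by itself prevent one scale factor from collapsing. The decisive ingredient is the bridge between the velocity-space and configuration-space descriptions: by Proposition~\ref{singular} the collapse $y_i\to-\infty$ is a property of the conical surface, and by Proposition~\ref{infinite} it is reached only as $\ta\to\infty$; since interior points correspond to finite $\ta$ and bounded velocities, no such collapse can occur there. This is exactly what separates the regular interior points from the singular boundary of the cone.
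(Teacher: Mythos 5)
Your proposal is correct and takes essentially the same route as the paper's proof: both rest on the corollary to Proposition \ref{endoftraj} (interior points are attained at finite $\ta$) together with the boundedness of the velocities inside the cone, so that integrating finite $\dot{u}_i$ over a finite $\ta$-interval keeps $u_1,\,u_2,\,u_3$, hence $y_1,\,y_2,\,y_3$, finite, making $a,\,b,\,c$ finite and nonzero. Your two add-ons are redundant rather than wrong: the upper bound extracted from the constraint \eqref{Lagrequcons} is subsumed by the finiteness of the $u_i$, and the side route via Proposition \ref{complete}, \eqref{length} and \eqref{for-exp} would by itself only bound the $e^{y_i}$ from above, so it could not replace the integration argument that rules out $y_i\to-\infty$.
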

\begin{proof}
Consider points in the interior of the cone, lying on a trajectory
of a solution in the $\dot{u}_i$ space, $~i=1,2,3$. The
coordinates $u_i$ are finite as integrals of finite $\dot{u}_i$
over a finite $\ta$ interval. Hence $y_i$ are also finite as
linear combinations of the respective $u_i$, whence $a^2,\;b/a$
and $c/a$ are finite and positive as they are equal to $\exp
y_1,\;\exp y_2$ and $\exp y_3$ respectively (see \eqref{qrs2y}).
This guarantees that also the scale factors $a,\,b$ and $c$ are
finite and nonzero.
\end{proof}

\noindent Finally,
\begin{proposition}
There is no possibility of a stop in the interior of the cone
(i.e., each point of positive kinetic energy corresponds to
nonzero acceleration).
\end{proposition}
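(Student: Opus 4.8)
The plan is to show that the acceleration can never be the zero vector at an interior point by exhibiting a single component that is strictly positive there, namely $\ddot{u}_1$. Since the acceleration vanishes only if all its components vanish, producing one nonvanishing component suffices, and this turns the statement into an essentially one-line consequence of the dynamics equations.

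First I would recall, from the proof of Proposition \ref{regular}, that every point in the interior of the cone is reached in finite $\ta$ and carries finite coordinates $u_1,\,u_2,\,u_3$: they are integrals of the velocities $\dot{u}_i$ (bounded inside the cone) over a finite $\ta$-interval. This is precisely what separates the interior from the conical surface, where by Proposition \ref{singular} at least $u_1$ and $u_2$ run off to $-\infty$ and the exponentials collapse to zero.

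With $u_1,\,u_2,\,u_3$ finite, the argument $2(u_1-u_2-u_3)$ is finite, so the exponential $e^{2(u_1-u_2-u_3)}$ is finite and strictly positive. The first Lagrange equation \eqref{Lagrequa} then gives
\be
\ddot{u}_1=\tfrac{1}{3}\,e^{2(u_1-u_2-u_3)}>0 ,
\ee
so the acceleration vector $(\ddot{u}_1,\ddot{u}_2,\ddot{u}_3)$ has a strictly positive first entry and cannot vanish; hence no stop is possible in the interior. Equivalently, one may read this off the transformed system \eqref{for-exp}, whose three right-hand sides are all positive exponentials, so in the interior none of them can be zero. This should be contrasted with the Corollary to Proposition \ref{singular}, where on the conical surface each of these exponentials, and hence each acceleration component, tends to zero, showing that the borderline case is exactly the surface and not the interior.

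The main point requiring care, rather than a genuine obstacle, is the passage from \emph{the exponential is positive} to \emph{the exponential is bounded away from zero}: this is not automatic from the constraint \eqref{Lagrequcons} alone, which merely fixes the \emph{sum} of two positive exponentials to equal the positive kinetic energy and would in principle permit one term to degenerate while the other carries all the energy. It is the finiteness of the $u_i$ at interior points, already secured by Proposition \ref{regular}, that keeps the individual term $e^{2(u_1-u_2-u_3)}$, and therefore $\ddot{u}_1$, strictly positive. Once that observation is in place the result is immediate.
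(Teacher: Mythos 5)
Your proof is correct, but it takes a genuinely different route from the paper's. The paper's proof is a one-line algebraic argument in the velocity space: it invokes equation \eqref{length}, the linear combination of \eqref{Lagrequa}, \eqref{Lagrequb} and the constraint \eqref{Lagrequcons}, which says $2\ddot{u}_2-9\ddot{u}_1=-\left(3\dot{u}_1^2-\dot{u}_2^2-3\dot{u}_3^2\right)$; at an interior point the right-hand side is strictly negative, so $\ddot{u}_1$ and $\ddot{u}_2$ cannot both vanish. That argument is completely insensitive to the values of the positions $u_i$ — it needs no finiteness input and survives even limiting situations in which one of the individual exponentials degenerates, which is exactly the loophole you worried about in your final paragraph. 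Your proof instead pins down the single component $\ddot{u}_1=\tfrac13 e^{2(u_1-u_2-u_3)}>0$, closing that loophole by importing the finiteness of the $u_i$ at interior points from Proposition \ref{regular} (itself resting on the corollary to Proposition \ref{endoftraj}); the chain is non-circular, so this is legitimate. What your route buys is a sharper, physically transparent conclusion — the volume-related acceleration is strictly positive, not merely "some component is nonzero" — and in fact the paper itself uses precisely this observation inside the proof of Proposition \ref{endoftraj}, where $\ddot{u}_1>0$ in the interior justifies the finiteness of the time integral \eqref{time2}. What the paper's route buys is economy and robustness: it is a pointwise identity requiring no appeal to earlier propositions at all.
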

\begin{proof}
This property follows directly from equation \eqref{length}. As
long as $3\dot{u}_1^2-\dot{u}_2^2-3\dot{u}_3^2>0$, we have
$2\ddot{u}_2-9\ddot{u}_1<0$, which requires at least one nonzero
component of the acceleration.
\end{proof}

Situations where the trajectory in the velocity space slows down
to almost full stop may happen (see Fig. 2), which corresponds to
the Kasner-type behavior discussed in subsection \ref{transition}.
\section{Solutions ending at the apex of the cone}\label{Sec4}
\subsection{The exact solution in the cone}
In terms of the $u_i$ variables, the exact solution reads
\be\label{sol-u}
u_1=\frac13\ln\frac{10800}{\lvert \ta-\ta_0\rvert^9} ,\quad
u_2=\half \ln\frac{40}{\lvert \ta-\ta_0\rvert^4},\quad
u_3=\frac16\ln\frac52.
\ee
Obviously, for $\ta>\ta_0$, both $\dot{u}_1$ and $\dot{u}_2$ are
proportional to $(\ta-\ta_0)^{-1}$, while $\dot{u}_3=0$. We also
have $\dot{u}_2=\frac23\dot{u}_1$, which means that the trajectory
corresponding to the exact solution is a half-line whose end lies
at the apex of the cone (see Fig. 1). Physically, it describes a
power-like collapse of all scale factors $a,\, b,\, c$ to zero,
i.e. a collapse of the universe, in all directions, to a point, as
$\ta\to\infty$ (which corresponds to the original time tending to
zero from the right).

The instability of the exact solution, found in \cite{GP} affects
also the solution in terms of $u_i$, only the coefficients at the
oscillatory terms are different. However, the solution itself is
regular up to the apex.

\subsection{On the uniqueness of the exact solution}
\label{Sub4.2}
 Due to the assumed indefinitely growing anisotropy, the ratios of the
scale factors, $a,\,b,\,c$, always tend to $0$ or $\infty$ for
$\ta\to\infty$. Therefore, the limits of their proportions at
$\ta\to\infty$ do not distinguish between various solutions. If we
want to see the differences between their asymptotics, we have to
raise them to appropriate powers to compensate for the anisotropy.
E.g., if $a$ is left at power $1$, we have to raise $b^{1/3}$ and
$c^{1/5}$ according to the Lie symmetry \eqref{Lie} (this can also
be seen in the exact solution \eqref{solution}, where
$b/a^3=10/9$, while $c/a^5=40/81$).

Hence the question from the title of the subsection may be
formulated as: ``is the all-direction collapse possible with any
other proportion of the scale factors at appropriate powers?''.

Instead of proportion between these powers of $b$, $c$ and $a$, we
may use their $1:1$ counterparts in the proportions between $q, \,
r,\,s$ (defined in \eqref{a2qrs}), without this somewhat
artificial raising to powers. This is possible thanks to
\be\label{ratios}
b/a^3=r/q,\qquad c/a^5=\left(r/q\right)\left(s/q\right).
\ee
This way, if finite or infinite limits of the l.h.s.'s of
\eqref{ratios}, exist, then also the ratios $r/q$ and $s/q$ have
well defined limits, and vice versa. Therefore, we may consider
the limits of the latter. Further, if they exist, then we have
also well defined limits of their logarithms $y_2-y_1$ and
$y_3-y_1$ respectively. By linear transformation \eqref{y2u} and
the de l'H\^opital rule, we can also interpret this fact in our
conical picture. Namely, this means that trajectories exist which
approach the apex in a definite direction as $\ta\to\infty$.

The answer to the question of uniqueness is positive. Namely
\begin{proposition}\label{uni-apex}
The only asymptotic of solutions to equations \eqref{L1qrs},
\eqref{L2qrs}, which ensures existence of limits of
$lim_{\ta\to\infty}(r/q)$ and $lim_{\ta\to\infty}(s/q)$ is that of
the exact solution, i.e.
\be\label{solqrs}
q= \frac{9}{(\ta-\ta_0)^2},\quad r= \frac{10}{(\ta-\ta_0)^2},\quad
s= \frac{4}{(\ta-\ta_0)^2},
\ee
which is equivalent to \eqref{solution} upon substitution
\eqref{a2qrs}.
\end{proposition}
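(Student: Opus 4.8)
The plan is to translate the hypothesis on the limits into a statement about the limiting \emph{direction} in which the trajectory reaches the apex, and then to pin down both that direction and the overall scale from the structure of \eqref{L1qrs}. Writing $y_1=\ln q,\,y_2=\ln r,\,y_3=\ln s$ as in \eqref{qrs2y}, the existence of finite nonzero limits $\rho\colonequals\lim_{\ta\to\infty}(r/q)$ and $\sigma\colonequals\lim_{\ta\to\infty}(s/q)$ is equivalent to $y_2-y_1\to\ln\rho$ and $y_3-y_1\to\ln\sigma$. First I would show that this forces the trajectory to end at the apex rather than at a generic point of the cone. By the corollary to Proposition \ref{limits-dot} each $\dot y_i$ has a finite limit $g_i$, and each $y_i\to-\infty$; since $y_2-y_1$ and $y_3-y_1$ stay bounded, dividing by $\ta$ gives $g_1=g_2=g_3=:g$. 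Letting $\ta\to\infty$ in the constraint \eqref{consy}, where the exponentials vanish, leaves the kinetic form $\tfrac34 g^2+2g^2+g^2+g^2+g^2=\tfrac{23}{4}g^2=0$, so $g=0$: all velocities vanish and the endpoint is the apex.

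With $\dot y_i\to0$ and $y_i\to-\infty$, the direction of approach is found by l'H\^opital. Since $y_2-y_1$ and $y_3-y_1$ are bounded, $y_2/y_1\to1$ and $y_3/y_1\to1$; applying l'H\^opital twice transfers these ratios to the accelerations, which by \eqref{L1qrs} with $r\sim\rho q,\ s\sim\sigma q$ obey
\[
1=\lim\frac{\ddot y_2}{\ddot y_1}=\frac{2-2\rho+\sigma}{2\rho-2},\qquad
1=\lim\frac{\ddot y_3}{\ddot y_1}=\frac{\rho-2\sigma}{2\rho-2}.
\]
Solving these two linear relations gives uniquely $\rho=10/9$ and $\sigma=4/9$, i.e.\ exactly the proportions $r/q$ and $s/q$ of the exact solution.

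It remains to fix the common power and the coefficients. Positing $y_i=-2\ln(\ta-\ta_0)+c_i+o(1)$ makes $\ddot y_i\sim 2(\ta-\ta_0)^{-2}$ and $e^{y_i}\sim e^{c_i}(\ta-\ta_0)^{-2}$; matching the $(\ta-\ta_0)^{-2}$ coefficients in $\ddot y_i=M(e^{y_1},e^{y_2},e^{y_3})^{\!\top}$ reduces \eqref{L1qrs} to the linear system $M(e^{c_1},e^{c_2},e^{c_3})^{\!\top}=2(1,1,1)^{\!\top}$, whence $(e^{c_1},e^{c_2},e^{c_3})=2M^{-1}(1,1,1)^{\!\top}=(9,10,4)$ by \eqref{M}. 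Equivalently, once $\rho=10/9$ is known the $y_1$-equation is asymptotically the Liouville equation $\ddot y_1\sim\tfrac29 e^{y_1}$, whose decaying branch (its first integral has vanishing constant since $\dot y_1\to0$ and $e^{y_1}\to0$ at the apex) integrates to $q=e^{y_1}\sim 9(\ta-\ta_0)^{-2}$, and then $r=\rho q$, $s=\sigma q$ give $10(\ta-\ta_0)^{-2}$ and $4(\ta-\ta_0)^{-2}$. This reproduces \eqref{solqrs}, hence \eqref{solution}. I expect the main obstacle to be the rigorous control of the $o(1)$ corrections: one must show that the formal power-law matching captures the genuine asymptotics, i.e.\ that no solution with $r/q,\,s/q$ convergent can approach the apex along a non-power law or with $\rho=1$ (which would cancel the leading term of $\ddot y_1$). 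This is where the Liouville first integral, the sign information $\dot u_1<0$, and Propositions \ref{limits-dot}--\ref{endoftraj} must be combined to exclude competing balances.
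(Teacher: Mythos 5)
Your proposal is correct, and its core coincides with the paper's own proof (Appendix A): both arguments pin down the proportions by applying de l'H\^opital's rule twice to $y_2/y_1$ and $y_3/y_1$ and substituting the accelerations from \eqref{L1qrs}, which yields exactly your two linear relations and hence $r/q\to 10/9$, $s/q\to 4/9$ (the paper phrases them as $\lim s/(r-q)=4$ and $\lim r/(r-q)=10$). The differences lie in the bookends. At the start, you obtain the apex property $g_1=g_2=g_3=0$ and the limits $y_2/y_1\to 1$, $y_3/y_1\to 1$ directly from boundedness of $y_2-y_1$, $y_3-y_1$ plus the constraint \eqref{consy}; the paper instead imports the apex statement from Proposition \ref{g_i=0} (Appendix B) and derives $y_2/y_1\to1$ by a further l'H\^opital step resting on the Remark, so your opening is more self-contained and slightly cleaner. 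At the end, the paper fixes the time dependence via the scalar identity \eqref{lengthy}: dividing by $\dot y_1^2$ and using that all velocity and acceleration ratios tend to $1$ gives $\frac{d}{d\ta}\left(1/\dot y_1\right)\to-\tfrac12$, which integrates to $\dot y_1\sim -2/(\ta-\ta_0)$, with $C=9$ then recovered from \eqref{consy}; you instead match the power-law ansatz against the matrix form of \eqref{L1qrs}, getting all three coefficients at once from $2M^{-1}(1,1,1)^{\top}=(9,10,4)^{\top}$, or equivalently reduce to a zero-energy Liouville equation for $y_1$. These endgames are interchangeable and of comparable rigor --- both are formal asymptotic integrations whose $o(1)$ control is not made rigorous. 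Note also that the degenerate case $\rho=1$ (which kills the leading term of $\ddot y_1$ and invalidates the division) is a gap you honestly flag, but the paper's equation \eqref{q-limit321} skirts it silently as well, so it does not count against you relative to the paper's own standard.
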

The proof has been put off to Appendix A.

\begin{remark}
Existence of the above limits means (by de l'H\^opital's rule)
that also the corresponding limits $lim_{\ta\to\infty}
\dot{r}/\dot{q}$ and $lim_{\ta\to\infty}\dot{s}/\dot{q}$ exist,
because the constraint \eqref{L2qrs} requires that $q\to 0,\,r\to
0,\,s\to 0$ whenever the kinetic term vanishes, i.e. for any
$y_1,\,y_2,\,y_3$ on the conical surface or at the apex.
\end{remark}

\section{Solutions approaching the lateral surface of the cone}\label{Sec5}
From Proposition \ref{endoftraj}, we know that the trajectory has
to approach the apex (as the exact solution) or the lateral
surface of the cone. In this section, we discuss the latter case.

\subsection{Asymptotics of the diagonal velocities}
Let a trajectory approach the conical surface, but not the apex.
According to Proposition~\ref{limits-dot}, all three velocities
have their limits at $\ta\to\infty$, namely $\dot{u}_i\to
g_i,~~i=1,2,3,~~g_1\ne 0$. If the limits were on the surface, they
would satisfy the equation of the cone
\be\label{g-cone}
3g_1^2-g_2^2-3g_3^2=0
\ee
With $\dot{u}_i\to g_i$, the asymptotic behavior of the diagonal
variables is $u_i\sim g_i \ta$. Translating dependence between
$u_i$ and $a,\,b,\,c$ \eqref{u2abc} into the corresponding
asymptotics of the scale factors, according to \eqref{u2abc}, we
obtain
\be
a\sim\exp(p_1 \ta),\quad b\sim\exp(p_2 \ta),\quad c\sim\exp(p_3
\ta)
\ee
where $ p_1=g_1 - g_2 - g_3,~~ p_2= g_1 + 2 g_3,~~p_3= g_1 + g_2 -
g_3$.

We could put any common coefficient in front of $p_i,\,i=1,2,3$ by
changing the scale of the time parameter $\ta$ (this was done in
\cite{Ryan}). By straightforward calculation, equation
\eqref{g-cone} turns into a constraint on the constants $p_i$
\be\label{p-cond}
p_1p_2+p_2p_3+p_3p_1=0,
\ee
which by rescaling of $\ta$ (including its direction) so that
$p_1+p_2+p_3=1$ (first Kasner's condition \eqref{p-cond1}
\cite{MTW}) is equivalent to the second Kasner's condition
\eqref{p-cond1}, in accordance with \cite{Ryan}.  This result
means that if a trajectory of a solution ended on the conical
surface, it would describe an exact Kasner's solution: the
universe would be squeezed to zero in one direction while being
stretched to infinity in the remaining two. Conversely, if $p_i$
satisfy the Kasner conditions \eqref{p-cond1}, then the
corresponding point $(g_1,g_2,g_3)$ lies on the surface of the
cone.

\subsection{Impossibility of the exact Kasner
asymptotics\label{Sub5.2}} Numerical calculation (see Fig.
\ref{kin-en}) shows that the Kasner solutions are reproduced with
high precision by solutions of the BKL. However, the
aforementioned exact Kasner-type solutions, though predicted and
described in \cite{Ryan}, do not exactly satisfy the BKL equations
\eqref{L1}, \eqref{L2} (which is acceptable as these equations are
approximate). In terms of the cone of kinetic energy, even in the
limit $\ta\to\infty$ does not the trajectory of the system reach
the lateral conical surface $E_k=0,~\dot{u_1}<0$. Namely, it gets
reflected from a hyperboloid $E_k=\eps>0$ before reaching the
boundary of the cone, although the minimum $\eps$ of the kinetic
energy may be very small (see Fig. \ref{kin-en}). This property
means that
\begin{proposition}\label{g_i=0}
The only possible asymptotic behavior of solutions to
\eqref{L1qrs}, which satisfies constraint \eqref{consy},
corresponds to $\g_1=\g_2=\g_3=0$, where
$\g_i=lim_{\ta\to\infty}y_i,~~i=1,2,3$.
\end{proposition}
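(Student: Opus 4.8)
The plan is to use the dichotomy of Proposition~\ref{endoftraj}: as $\ta\to\infty$ every trajectory reaches the apex or the lateral surface of the cone, and by Proposition~\ref{limits-dot} the velocity limits $g_i=\lim_{\ta\to\infty}\dot u_i$ exist and satisfy $3g_1^2-g_2^2-3g_3^2=0$ with $g_1\le 0$. Reaching the apex is exactly $g_1=g_2=g_3=0$ (equivalently $\lim\dot y_i=0$), so it suffices to rule out the lateral surface. Suppose, for contradiction, that the limit lies there; since the cone forces $g_2^2+3g_3^2=3g_1^2$ and $g_1\le 0$, a nonzero limit means $g_1<0$. By the analysis of Section~\ref{Sec5}, this is the exact-Kasner regime $u_i\sim g_i\ta$, with $a\sim e^{p_1\ta}$, $b\sim e^{p_2\ta}$, $c\sim e^{p_3\ta}$, and the cone equation equivalent to the Kasner relation \eqref{p-cond}.

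First I would read off the decisive restriction from the constraint. Because $\dot u_i\to g_i$, the kinetic energy tends to its value at the limit, $E_k\to 3g_1^2-g_2^2-3g_3^2=0$; but the constraint \eqref{consy} identifies $E_k$ with $e^{y_1}+e^{y_2}+e^{y_3}$, a sum of positive terms. Hence each exponential tends to $0$, i.e. $y_1=2\ln a$, $y_2=\ln(b/a)$ and $y_3=\ln(c/b)$ all tend to $-\infty$. Asymptotically these grow like $2p_1\ta$, $(p_2-p_1)\ta$ and $(p_3-p_2)\ta$; since the $\dot y_i$ have limits, the three slopes are nonpositive, giving $p_3\le p_2\le p_1\le 0$.

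The heart of the argument is to collide $p_3\le p_2\le p_1\le 0$ with the Kasner relation \eqref{p-cond}, $p_1p_2+p_2p_3+p_3p_1=0$. With all $p_i\le 0$ every product $p_ip_j\ge 0$, so the sum vanishes only if each product does; together with the ordering this forces $p_1=p_2=0$, leaving $(p_1,p_2,p_3)=(0,0,p_3)$ with $p_3\le 0$. Inverting the nonsingular map $g\mapsto p$, the value $p_3=0$ returns the apex $g_1=g_2=g_3=0$, which is the claim; the only alternative is the single degenerate point with $p_3<0$, i.e. $(0,0,1)$ after rescaling of $\ta$.

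I expect the real obstacle to be exactly this surviving degenerate point. Because its first two exponents vanish, $b/a$ does not tend to $0$, so it is the non-anisotropic (flat, non-singular) Kasner solution and is excluded by the standing hypothesis $a\gg b\gg c$ of indefinitely growing anisotropy that underlies the BKL reduction; this gives the cleanest exclusion. Excluding it by the dynamics alone would be more delicate: the inequalities only deliver $p_i\le 0$, and at this point $p_1=p_2=0$, so ``$y_1,y_2\to-\infty$'' would have to occur at a sub-linear rate, which the mere integrability of $e^{y_1},e^{y_2},e^{y_3}$ (inherited from the finite limits in \eqref{for-exp}) does not forbid; a finer asymptotic analysis of \eqref{L1qrs} near this edge would be required, and that is where I would concentrate the technical effort.
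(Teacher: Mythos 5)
Your first stage is sound, and it is essentially the paper's own Stage~1 (the Lemma of Appendix~B): there the vanishing of the kinetic part of \eqref{consy} at the limit point, together with the vanishing of the potential part (which forces all limiting slopes to be nonpositive), turns the quadratic form into a sum of nonnegative products, each of which must vanish; this gives $\g_1=\g_2=0$, $\g_3\le 0$ in the notation $\g_i=\lim_{\ta\to\infty}\dot y_i$. Your ordering $p_3\le p_2\le p_1\le 0$ combined with \eqref{p-cond} is the same argument written in the Kasner exponents, so up to that point you and the paper agree. The genuine gap is exactly the point you flag at the end: the degenerate case $p_3<0$ (i.e.\ $\g_3<0$) is never excluded, and your proposed ``cleanest exclusion'' does not work. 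It rests on the claim that at this point ``$b/a$ does not tend to $0$,'' which contradicts your own second paragraph: the constraint \eqref{consy} forces the potential part to vanish along with the kinetic part, so $e^{y_2}=b/a\to 0$ there as well --- only at a sub-exponential rate (the exact solution itself shows that all $\dot y_i\to 0$ is compatible with all $y_i\to-\infty$). So the surviving point is not ``the flat, non-singular Kasner universe,'' and it cannot be dismissed on anisotropy grounds; moreover, $a\gg b\gg c$ is a modeling hypothesis behind the derivation of \eqref{L1}, not a constraint of the ODE system, whereas the proposition is a statement about \emph{all} solutions of \eqref{L1qrs} with \eqref{consy}.

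The missing step, which the paper supplies as Stage~2 and which is shorter than the ``finer asymptotic analysis'' you anticipate, is this. Assume $\g_3<0$. Adding the first two rows of \eqref{L1qrs} gives the exact identity
\be
\ddot y_1+\ddot y_2=e^{y_3},
\ee
whose right-hand side is then squeezed, for large $\ta$, between decaying exponentials $e^{y_3(T)+(\g_3\mp\ep)(\ta-T)}$ with $\g_3+\ep<0$. Integrating once and using $\lim_{\ta\to\infty}\dot y_1=\lim_{\ta\to\infty}\dot y_2=0$ (your Stage~1) to set the integration constant to zero, then integrating again, shows that $y_1+y_2$ converges to a \emph{finite} limit as $\ta\to\infty$. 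But, as noted above, the constraint forces $y_1\to-\infty$ and $y_2\to-\infty$. This contradiction rules out $\g_3<0$, completing the proof. Without this (or an equivalent) argument, your proposal establishes only $\g_1=\g_2=0$, $\g_3\le 0$, not the proposition.
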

\begin{remark}
This means that all solutions eventually end at the apex, i.e.,
the fate of the universe is a total collapse to a point (with the
original time -- if we follow the history of the universe
backwards, we end at a point). According to Proposition
\ref{infinite}, this happens only in the limit $\ta\to\infty$.
Moreover, together with Proposition \ref{uni-apex}, it means that
the collapse is always chaotic.
\end{remark}
 The proof is lengthy and therefore it has been put
off to Appendix B.

\section{Quasi-Kasner solutions}\label{quasi}
\subsection{Description}
Although exact Kasner solutions do not satisfy the system
\eqref{L1},\eqref{L2}, numerical calculations show that
approximate Kasner-type solutions of these equations are possible
and precise. Namely, the trajectories may approach the surface of
the cone and bounce at a short distance from it, thus switching
the universe to what may be considered the next Kasner epoch. The
trajectory then passes through the interior of the cone until it
approaches another point almost on its surface, at a less negative
value of $\dot{u}_1$ (as this coordinate may only increase,
according to \eqref{for-expa}). As the cone narrows, the amplitude
of this quasi-periodic oscillations diminishes. This behavior
corresponds to reflections from the potential walls on Misner's
diagrams \cite{MTW}, while the surfaces of the corresponding
quadrics (lower halves of the two-sheet hyperboloids)
\be\label{hyper}
3 \dot{u}_1^2-\dot{u}_2^2-3\dot{u}_3^2=\eps_n
\ee
play the role of the equipotentials. The index $n$ numbers the
reflections while $\eps_n$ is a measure of its closeness to the
surface of the cone.

Since the volume scale is proportional to $\exp{\left(\tfrac32
u_1\right)}$, while the time derivative, $\dot{u}_1$, is negative
throughout the evolution, the universe becomes more compact at
subsequent reflections, although the reduction need not concern
the scales in all directions.

Apparently, as seen in Fig. \ref{ui}, the velocity components
$\dot{u}_i$ seem to remain constant for some time and the kinetic
energy looks as if it were equal to zero in Fig. \ref{kin-en}. A
logarithmic scale is necessary to reveal the actual variation of
these quantities in that figure. This is due to the exponential
dependence of the derivatives on the values of these variables.

\begin{figure}
\begin{center}
\includegraphics[width=0.8\textwidth]{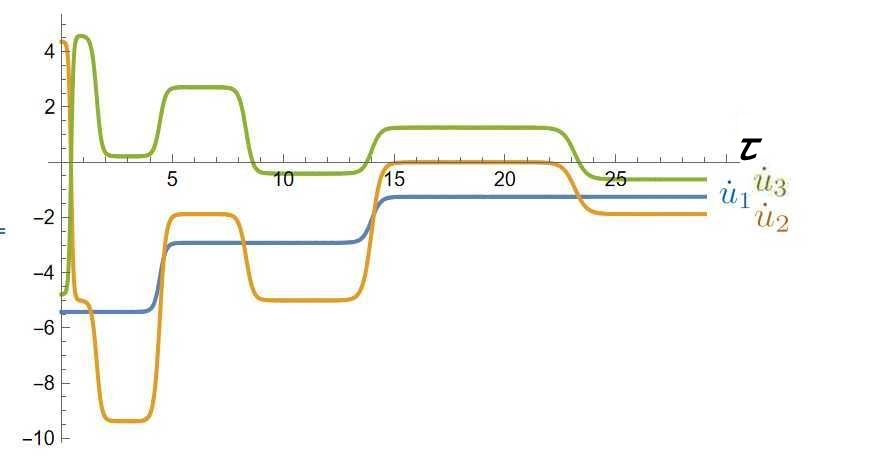}
\caption{Three components of the velocity, $\dot{u}_1,\,\dot{u}_2$
and $\dot{u}_3$, as functions of the time parameter $\ta$. The
initial conditions correspond to $q(0)=0.225,\,
r(0)=0.25,\,s(0)=0.1,\,\dot{q}(0)=-2,25,\,\dot{r}(0)=-2,5$. Each
of the velocity components has time intervals of apparently
constant value. Revealing their variability requires a logarithmic
scale, as seen in the next figures.\label{ui}}
\end{center}
\end{figure}

\begin{figure}
\begin{center}
\includegraphics[width=0.8\textwidth]{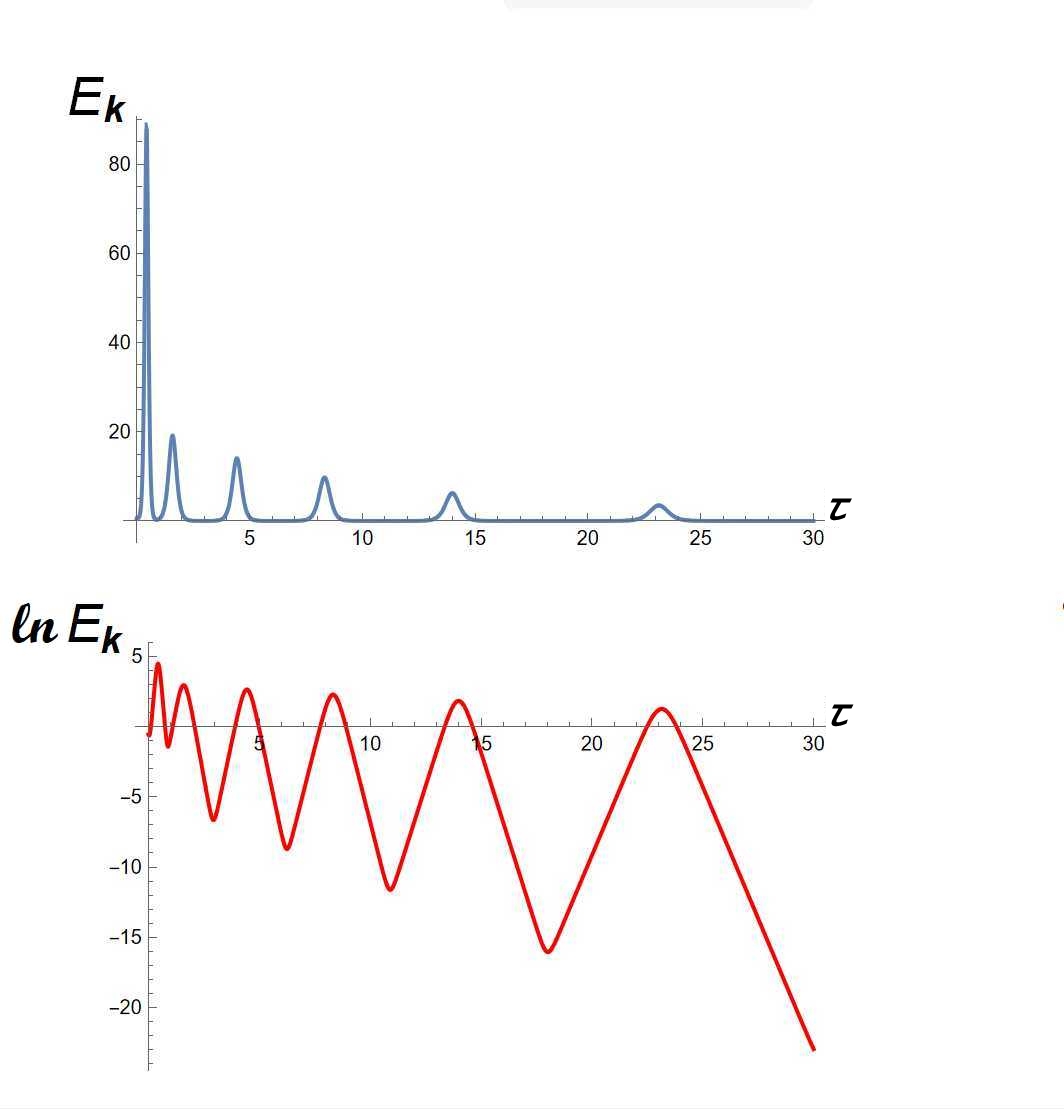}
\caption{The kinetic energy as a function of the time parameter
$\ta$ for $q(0)=0.225,\,
r(0)=0.25,\,s(0)=0.1,\,\dot{q}(0)=-2,25,\,\dot{r}(0)=-2,5$. In the
upper graph, apparently, $E_k$ systematically reaches zero
corresponding to the surface of the cone, and stays at this level
for a long time, but the logarithmic scale in the lower graph
reveals its sawtooth oscillations, with the teeth a little rounded
(of shape $\ln\, \sech^2$) at the reflections from the
hyperboloidal surfaces \eqref{hyper}.\label{kin-en}}
\end{center}
\end{figure}
\subsection{Between Kasner's epochs\label{transition}}
In \cite{Ryan}, the consecutive switching between Kasner's epochs
was extended from its original Mixmaster universe to the BKL
scenario. This was achieved by assuming the l.h.s.'s of equations
\eqref{L1} to be zero, which was justified as an approximation of
\eqref{L1}, \eqref{L2} for large $\ta$ and slow variation. The
solution was obtained in the form \cite{Ryan}
\be\label{Ryan-app}
a=A_a\exp(2\Lambda p_a t),\quad b=A_b\exp(2\Lambda p_b t),\quad
c=A_c\exp(2\Lambda p_c t),
\ee
where $p_a,\,p_b$ and $p_c$ satisfied the Kasner conditions
\eqref{p-cond1}. The authors noticed that the exponential
dependence of the scale factors caused growing of the r.h.s. in
equations \eqref{L1} so that the approximation was no more valid
after some time (or rather some $\ta$). They stated that this led
to the exchange of the role of two scale factors, i.e. transition
to the next Kasner epoch.

In this section, we examine this transition in more detail. The
best variables for this purpose are the exponents
$y_1,\,y_2,\,y_3$ and the corresponding exponential functions
$q,\,r,\,s$, satisfying equations \eqref{L1qrs} and \eqref{L2qrs}.

Let the trajectory described by the trio
$(\dot{y}_1,\,\dot{y}_2,\,\dot{y}_3)$ closely approach a point
$\g_1,\,\g_2,\,\g_3$ on the lateral surface of the cone. Without
touching it, the trajectory would ``bounce'' from a potential wall
according to \eqref{hyper} close to this point. Let $\ta=0$ be the
moment when the bounce occurs. Then the asymptotics of
$y_i,\;i=1,2,3$ would be
\be\label{asympt}
y_i^0+\g_i \ta,~~i=1,\,2,\,3,
\ee
where $y_i^0$ are constants. In the generic case, all $\g_i$ would
be different. Hence, as $\ta$ increases, one of the exponents
would soon become significantly larger than the two others; this
would make one of the exponential functions $q=\exp{y_1},\, r=\exp
y_2,\,s=\exp y_3$ much greater than the remaining two. This
disproportion of sizes allows for a reduction of \eqref{L1qrs} by
retaining the largest of them in \eqref{L1qrs} while approximating
the other two by $0$ in some neighborhood of the turning point.
The present approximation goes one step beyond the approximation
of \cite{Ryan}, which neglected all exponential functions, and
thus it yields more information on the transition period. Note
that the kinetic energy is equal to minus potential energy,
$E_k=q+r+s$ (according to \eqref{L2qrs}). In the above
approximation it reduces to only one term: the greatest of these
three variables.

Whichever of the $y_i$ is the greatest, we obtain an easily
solvable system of ODE. If the dominant exponent is $y_1$ (for
example), then the solution, with physically acceptable signs of
the integration constants, after a short time from the turning,
reads
\bs\label{sol11}
\begin{align}
q\colonequals &\,a^2=k_q^2 \sech^2 k_q (\ta - \ta_+),\label{sol11q}\\
r\colonequals &\,b/a=k_r^2 e^{\beta \ta} \cosh^2 k_q (\ta - \ta_+),\label{sol11r}\\
s\colonequals &\,c/b=k_s^2 e^{\g \ta},\label{sol11s}
\end{align}
\es
where $\ta_+,\, k_q,\,k_r,\,k_s,\,\beta$ and $\g$ are real
constants, of which $\ta_+>0,$ while $k_q,\,k_r$ and $k_s$ may be
assumed positive. Consistency requires that $q\gg r$ and $q\gg s$
in some neighborhood of $\ta=\ta_+$. Hence $k_q\gg k_r$ and $k_q
\gg k_s$. This ordering complies with the neglect of $r=\exp{y_2}$
and $s=\exp(y_3)$ in \eqref{sol11}.

The constraint \eqref{L2qrs} limits the possible constant
parameters; it requires $k_q^2=\beta(\beta+\g)$, without any
condition imposed on the other parameters.

At the turning point, an exchange of the dominant exponents should
take place in the Kasner-type solutions. Indeed on the other side
of the turning point the dominant term should be $y_2$ or $y_3$,
as $y_1$ decreases with the distance from $\ta_+$ (and so does
$q$), while at least one of $y_2$ (and $r$) or $y_3$ (and $s$)
increases. It depends on the values of $\beta$ and $\g$ which of
these would overcome $y_1$ (and $q$) at some distance from
$\ta_+$. Consider, e.g. a transition from the situation where
$y_2$ is the greatest of $y_i$ for a little negative $\ta$; after
the bounce at $\ta=0$, $y_1$ becomes the greatest. Then, in some
interval to the left of $\ta=0$, we would have $r\gg q$ and $r\gg
s$. The solution for these a little negative $\ta$ would read
\bs\label{sol21}
\begin{align}
q&={k'_q}^2 e^{\al \ta}\cosh^2 k'_q (\ta - \ta_-),\label{sol21q}\\
r&={k'_r}^2 \sech^2 k'_q (\ta - \ta_-),\label{sol21r}\\
s&={k'_s}^2 e^{\g_1 \ta}\cosh k'_q (\ta - \ta_-),\label{sol21s},
\end{align}
\es
where the integration constants in \eqref{sol21}, $\al, \ta_-<0$,
as well as positive $k'_q,\,k'_r,\,k'_s,$,  correspond to their
counterparts for $\ta>0$ in \eqref{sol11}.
Solutions \eqref{sol11} with \eqref{sol21} provide detailed
picture of the dynamics in the right and left neighborhoods of the
turning point $\ta=0$. With the $\ta$-distance from the turning
point, the solutions containing the $\cosh$ and $\sech$ functions
very soon turn into their asymptotics due to the rapid decay of
their $\exp$ components having negative exponent, compared to
those with the positive one. As a consequence, the asymptotics of
their logarithms $y_1=\ln q,~y_2=\ln r$ and $y_3=\ln s$ are
piecewise linear functions, e.g.
\be\label{y1as}
y_1=\ln (4k_q^2)-2k_q\lvert\ta-\ta_+\rvert,\quad y_2=\ln
(k_r^2/4)+\beta\ta+2 k_q\lvert\ta-\ta_+\rvert
\ee
on the positive side of $\ta=0$, and
\be\label{y2as}
y_1=\ln ({k'_q}^2/4)+\al\ta+2 k'_r\lvert\ta-\ta_-\rvert, \quad
y_2=\ln (4 {k'_r}^2)-2 k'_r\lvert\ta-\ta_-\rvert
\ee
on the negative side. This way, each of these logarithmic
variables undergoes a sawtooth oscillation (the tooth is blunt as
the actual functions are logarithms of hyperbolic $\sech$ and
$\cosh$ rather than the absolute values of $\ta-\ta_+$ and
$\ta-\ta_-$). The linearity in the neighborhood of $\ta=0$ is
consistent with the assumed equations \eqref{asympt}, provided
that we adjust the parameters of the lines, namely, they must
satisfy
\begin{align}
&k_q=\frac{\g_1}{2},~k'_r=-\frac{\g_2}{2},~,k'_q=\frac{2}{\lvert\g_1\rvert}\exp\frac{y_1^0+y_2^0}{2},~k_r=\frac2{\lvert\g_2\rvert}\exp\frac{y_1^0+y_2^0}{2}\nn\\
&\al=\beta=\g_1+\g_2,~~\ta_+=\left(\ln\g_1^2-y_1^0\right)/\g_1,~~\ta_-=\left(\ln\g_2^2-y_2^0\right)/\g_2.
\end{align}
This way, the position $(\g_1,\,\g_2)$ of the bounce, together
with the values $y_1^0$ and $y_2^0$ completely determine the local
evolution. Together with the $y_3$ data we would have 6
parameters, reduced to 5 by the constraint \eqref{consy}, which
turns into $\g_3=-\left(\tfrac34
\g_1^2+2\g_1\g_2+\g_2^2\right)/(\g_1+\g_2)$.

Equation \eqref{y1as} describes also the behavior of $\ln E_k$, as
the contributions of $r$ and $s$ to $E_k$ are negligible in the
right neighborhood of $\ta=0$, while the contributions of $q$ and
$s$ may be neglected in its left neighborhood. The result is
sawtooth shape of the dependence $E_k(\ta)$, see Fig.
\ref{kin-en}. In the graph of $q,\,r,\,s$ (Fig. \ref{qrq}), the
kinetic energy would approximately be their upper envelope, as
$E_k\approx \max({q,r,s})$.

For larger $\ta$-distances from $\ta_+$, if only $k_q>-\beta$,
i.e., $\g_1>-\tfrac23\g_2$, then the increasing function $r$ of
\eqref{sol11} grows fast with $\ta$ until it overtakes the
decreasing function $q$, while the growth of $s$ remains slow, if
any (this depends on $k_s$). Thus $y_1$ is decreasing while $y_2$
is increasing. After some time $\ta$, overtaking of $y_1$ by $y_2$
(and $q$ by $r$) must happen. Later, the reciprocal situation
would affect $r$. As long as $s\ll q$ and $s\ll r$, a sequence of
exchanges between these two variables takes place (see Fig.
\ref{qrq}). This way, we obtain sawtooth oscillations, alternating
as in Fig.~\ref{qrq}.
\begin{figure}
\begin{center}
\includegraphics[width=0.8\textwidth]{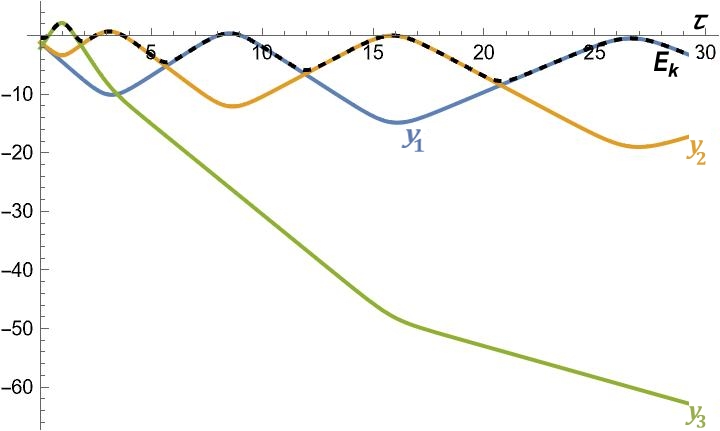}
\caption{The logarithmic variables $y_1=\ln q,\,y_2=\ln r$ and
$y_3=\ln s$ as functions of time parameter $\ta$ for $q(0)=0.3,\,
r(0)=0.33,\,s(0)=0.12,\,\dot{q}(0)=-1,\,\dot{r}(0)=-1$. Variables
$y_1$ and $y_2$ exhibit sawtooth behavior, exchanging their
dominant roles, while $y_3$ is much smaller. The kinetic energy
$E_k$ (dashed line) is the upper envelope of the $y_i$ variables.
The tips of the saw teeth are more rounded than those in the
previous graph (Fig. \ref{kin-en}) as the initial ``velocities''
are less than halves of those in Fig. \ref{kin-en}. Still the term
``sawtooth'' is justified by the rectilinear shape of the teeth
outside their blades.\label{qrq}}
\end{center}
\end{figure}

The above reversal of the ordering $q=a^2$ with $r=b/a$ is
equivalent to reversing the order of $a^3$ and $b$, while the two
are equal (up to a constant 9/10) for the exact solution. Similar
situations (with the exact solution in the middle) occur for the
other reversals.



If we wanted to follow the dynamics beyond the validity of the
above approximation, we might take another step, considering the
system \eqref{L1qrs}, with only one of the exponential functions,
$\exp y_1,\,\exp y_2,\,\exp y_3$ set to $0$ in \eqref{L1qrs},
while two of them remain nonzero. Such systems have been solved
and discussed by Conte \cite{Conte} as systems linked to the BKL
scenario. These systems prove to be useful as approximations of
the actual behavior in the scenario.

With the exception of case $b/a=r=\exp(y_2)\approx 0$, the
solutions are much more complex than those for only a single
nonzero $\exp(y_i)$. For this simplest case, the solution (mutatis
mutandis equivalent to that of \cite{Conte}) reads
\bs\label{sol22}
\begin{align}
q =&k_q^2 \sech^2 k_q (\ta - \ta_-),\label{sol22a}\\
r =&k_r^2 e^{\al \ta} \cosh^2[k_q (\ta - \ta_-)] \cosh[k_s (\ta
- \ta_3)],\label{sol22b}\\
s=&k_s^2 \sech^2[k_s (\ta - \ta_3)],\label{sol22c}
\end{align}
\es
where the constants $\ta_-$ and $\ta_3$ are time values
corresponding to the turning points of $y_1$ and $y_3$
respectively, while $\al,\,k_q,\,k_r$ and $k_s$ have the similar
sense as in the previous three cases. The kinetic energy is $q+s$,
hence it is the sum of the two $\sech^2$ functions.

In \eqref{sol22}, it is evident, that also in this case the
exchange takes place between the initially greater of $q$ and $s$
being surpassed by $r$, even though $r$ was assumed to be $0$ in
the neighborhood of $\ta=\ta_+$. After this exchange, the solution
\eqref{sol22} ceases to be valid.

The other two initial orderings which allow for approximating
$s=0$ or $q=0$ on the r.h.s. of \eqref{L1qrs} lead to more refined
solutions \cite{Conte}, namely to a solution expressed in terms of
the Painlev\'e III function (for $s=0$) or a more complex
combination of exponential functions (for $q=0$). As long as the
neglected third component is actually unnecessary, as in
Fig.~\ref{qrq}, the Conte solutions precisely describe the actual
behavior of the universe near the cosmological singularity within
the BKL scenario.

\section{Conclusions}
In the present work, the BKL scenario for the homogeneous universe
in the neighborhood of the cosmological singularity is
supplemented by a few exact results following from the BKL
asymptotics of the Einstein equations. The Lagrangian-Hamiltonian
version of the dynamics described by those equations is
illustrated and analyzed by means of a simple geometric tool, a
cone of the kinetic energy.

It was known that the approach to the singularity had oscillatory
character \cite{BKL,Ryan}. In this paper, a detailed description
of the oscillations in the neighborhood of their turning points is
derived as a solution to reduced equations. The oscillations are
found to have sawtooth shape in the logarithmic variables and so
is the shape of the logarithm of the kinetic energy. Solutions of
other reduced equations, worked out in \cite{Conte}, may describe
dynamics for a longer time, i.e., as long as one of the variables,
$a^2,\, b/a$, or $c/b$, remains significantly smaller than the
other two as in Fig. \ref{qrq}. Whatever the oscillations be, the
singularity relying on the collapse in all 3 dimensions must
eventually happen in the limit $\ta\to\infty$ (i.e. $t\to 0^+$)
for all initial (or final) conditions corresponding to decreasing
volume $dV/d\ta|_{\ta=0}<0$. Within the scenario, singularities of
exact Kasner's shape do not happen, even in the limit
$\ta\to\infty$. This means that never does the universe shrink to
zero in some directions while extending to infinity in the others,
like in the exact Kasner solutions \eqref{Kasner}, although the
oscillations might approach such solutions closely. No
singularity, of any type, may happen for finite $\ta$. Finally,
the exact solution derived in \cite{GP} is found to be the only
one which describes the approach to the singularity with
well-defined proportions between the directional scale factors (at
the appropriate powers). Since \cite{GP} predicts instability of
that exact solution, this means that the chaos on the approach to
the singularity is inevitable.

\begin{appendices}

\section{Proof that asymptotic of the exact solution is the only path to the
collapse with well-defined proportions between $q,\,r,\,s$
(Proposition \ref{uni-apex})}
\begin{proof} \textit{Plan:}
 We will first (i) prove that all ratios $y_i/y_j,~\,i,j=1,2,3$, tend to
$1$. Then (ii) these limits yield the limits of ratios: $r/q\to
10/9$ and $s/q\to 4/9$, which are identical with the corresponding
limits in the exact solution \eqref{solqrs}. Finally (iii), we
find the asymptotic time dependence of $q,\,r,\,s$ to be that of
the exact solution, i.e., proportional to $(\ta-\ta_0)^{-2}$ with
coefficients $9,\,10,\,4$ respectively.

In these calculations we make use of de l'H\^opital's rule applied
to quotients $r/q$ and $s/q$ (see Remark after Proposition
\ref{uni-apex}) and to ratios $y_i/y_j$ and
$\dot{y}_i/\dot{y}_j,~i,j=1,2,3$. In particular, we have
$\lim_{\ta\to\infty}y_i/y_j=
\lim_{\ta\to\infty}\dot{y_i}/\dot{y_j}=\lim_{\ta\to\infty}\ddot{y_i}/\ddot{y_j}$,
because for all $i,~y_i\to -\infty$, while all $\dot{y}_i\to 0$,
at the apex.

i) We have, from de l'H\^opital's rule
\be
\lim_{\ta\to\infty}\frac{y_2}{y_1}=\lim_{\ta\to\infty}\frac{\ln
r}{\ln q}=\lim_{\ta\to\infty}\frac{\dot{r}/r}{\dot{q}/q}=1.
\ee
For other pairs chosen from $y_1,\,y_2,\,y_3$ and their respective
counterparts $q,\,r,\,s$ the procedure is identical.

ii) We may apply de l'H\^opital's rule twice, as the all-direction
collapse means that the trajectory in the cone tends to the apex,
whence $\dot{y}_i\to 0$ for all $i$. Using also the dynamics
equations \eqref{L1qrs} for replacement of the second derivatives
of $y_i$, we obtain
\be\label{q-limit321}
1=\lim_{\ta\to\infty}\frac{y_2}{y_1}=\lim_{\ta\to\infty}\frac{\ddot{y}_2}{\ddot{y}_1}=\lim_{\ta\to\infty}\frac{2(\exp
y_1-\exp y_2)+\exp y_3}{-2(\exp y_1-\exp
y_2)}=\lim_{\ta\to\infty}\frac{s}{2(r-q)}-1,
\ee
where we have replaced $\exp y_i$ by the appropriate of $q,\,r$,
or $s$. From \eqref{q-limit321}, we get
\be\label{sr-q}
\lim_{\ta\to\infty}s/(r-q)=4
\ee
By similar operations on $y_3/y_1$, we obtain
\be\label{q-limit21}
1=\lim_{\ta\to\infty}\frac{r-2s}{2(r-q)}.
\ee
Substituting \eqref{sr-q} to \eqref{q-limit21}, we obtain
\be\label{r-q}
\lim_{\ta\to\infty} r/(r-q)=10, \text{  whence  }
\lim_{\ta\to\infty} r/q=10/9.
\ee
If the limit of $r/(r-q)$ is substituted from \eqref{r-q} to
\eqref{sr-q}, we obtain
\be\label{sq}
\lim_{\ta\to\infty}(s/r)=4/10, \text{  whence
}\lim_{\ta\to\infty}(s/q)=4/9.
\ee
iii) The asymptotic $\ta$ dependence may be recovered from
\eqref{length}, which in terms of $y_i$ has the form
\be\label{lengthy}
\frac34\dot{y}_1^2+2\dot{y}_1\dot{y}_2+\dot{y}_2^2+\dot{y}_1\dot{y}_3+\dot{y}_2\dot{y}_3
-\frac92\ddot{y}_1-5\ddot{y}_2-2\ddot{y}_3=0.
\ee
Dividing both sides of \eqref{lengthy} by $\dot{y}_1^2$ and making
use of the fact that all quotients $\dot{y}_i/\dot{y}_1$ and
$\ddot{y_i}/\ddot{y_1}$ tend to 1, we obtain the asymptotic, which
in the lowest order may be written as
\be
\lim_{\ta\to\infty}\frac{d}{d\ta}\left(\frac{1}{\dot{y}_1}\right)=-\frac12,
\ee
Integrating, we get the asymptotic of $y_1$ in the neighborhood of
$\ta=\infty$
\be\label{y1}
\dot{y}_1= -2/(\ta-\ta_0),\qquad y_1= \ln \frac{C}{(\ta-\ta_0)^2}.
\ee
While the value of $\ta_0$ is arbitrary, the value of $C$ may be
recovered by substitution of \eqref{y1} into the constraint
\eqref{consy}, which yields $C=9$. Substitution of this $C$ into
\eqref{y1} yields $y_1$ and $q=\exp y_1$ as in \eqref{solqrs},
then the asymptotics of $r$ and $s$ can be obtained from
\eqref{r-q} and \eqref{sq} respectively. This way, the asymptotic
of any solution with well-defined limits of $r/q$ and $s/q$ proves
to be identical with the exact solution \eqref{solqrs}, equivalent
to \eqref{sol-u} and \eqref{solution} (for $\ta>\ta_0$).
\end{proof}
\section{Proof that all trajectories eventually tend to the apex
(Proposition \ref{g_i=0})}
We will prove it in two stages, using
the $y_i,~i=1,2,3$ variables of \eqref{qrs2y} (which are connected
with the scale factors by \eqref{a2qrs} and with the $u_i$ through
\eqref{y2u}).

\underline{Stage 1}
\begin{lemma}
 Let $\g_1, \g_2, \g_3$ be
the limits of $\dot{y}_1,\dot{y}_2,\dot{y}_3$ (respectively) at
$\ta\to\infty$. Then $\g_1=\g_2=0$, while $\g_3\le 0$.
\end{lemma}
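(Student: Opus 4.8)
The plan is to combine two facts that are already available from the preceding analysis — the existence and sign of the velocity limits, and the vanishing of the kinetic form in the limit — with one purely algebraic observation about the matrix $M^{-1}$ of \eqref{M}: all of its entries are nonnegative, while its $(3,3)$ entry vanishes.

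First I would record the two prerequisites. By Proposition \ref{limits-dot} and the Corollary following it, each $\dot y_i$ converges to a finite limit $\g_i$ and, moreover, $y_i\to-\infty$ for every $i$. Hence if some $\g_i$ were positive, $y_i$ would eventually be increasing and tend to $+\infty$, contradicting $y_i\to-\infty$; therefore $\g_i\le 0$ for all $i$, i.e. the limit velocity lies in the closed negative octant. Since $q=e^{y_1},\,r=e^{y_2},\,s=e^{y_3}\to 0$, passing to the limit $\ta\to\infty$ in the constraint \eqref{consy} gives
\[
\tfrac34\g_1^2+2\g_1\g_2+\g_2^2+\g_2\g_3+\g_3\g_1=\lim_{\ta\to\infty}(q+r+s)=0,
\]
which is precisely the statement that $(\g_1,\g_2,\g_3)$ sits on the cone of vanishing kinetic energy.

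The decisive step is then elementary. The left-hand side above is the kinetic form of \eqref{consy}, whose coefficients come from the \emph{nonnegative} matrix $M^{-1}$; explicitly its five monomials $\tfrac34\g_1^2,\ 2\g_1\g_2,\ \g_2^2,\ \g_2\g_3,\ \g_3\g_1$ all carry positive coefficients. On the negative octant every product $\g_i\g_j$ is nonnegative, so each of these monomials is separately $\ge 0$; since they sum to zero, each must vanish. In particular $\tfrac34\g_1^2=0$ and $\g_2^2=0$ force $\g_1=\g_2=0$, whereas the coefficient of $\g_3^2$ in the form is $0$, so no term constrains $\g_3$ beyond the sign $\g_3\le 0$ already obtained. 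This yields exactly $\g_1=\g_2=0$ and $\g_3\le 0$.

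I do not expect a genuine obstacle, since both inputs are quoted from earlier results; the only point deserving care is conceptual. The cone condition \emph{alone} does not give the conclusion — the locus $\g^{\top}M^{-1}\g=0$ is a full two-dimensional surface and in particular contains the whole ray $(0,0,\g_3)$ — and it is precisely the restriction to the negative octant $\g_i\le 0$ that collapses it to $\{\g_1=\g_2=0,\ \g_3\le 0\}$. I would therefore be careful to invoke $y_i\to-\infty$, and not merely the finiteness of the velocity limits, so that the sign information on all three components is genuinely in hand; this is what makes the entrywise nonnegativity of $M^{-1}$ usable and is the crux of the argument. (Note that this stage cannot exclude $\g_3<0$, consistent with the fact that the full Proposition \ref{g_i=0} requires the separate Stage~2.)
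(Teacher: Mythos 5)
Your proof is correct and takes essentially the same route as the paper's: both establish $\g_i\le 0$ from the potential part vanishing (equivalently $y_i\to-\infty$), pass to the limit in the constraint \eqref{consy} to obtain $\tfrac34\g_1^2+2\g_1\g_2+\g_2^2+\g_2\g_3+\g_3\g_1=0$, and then use the fact that every monomial of this form is nonnegative on the octant $\g_i\le 0$, so each must vanish, forcing $\g_1=\g_2=0$ while leaving $\g_3\le 0$ unconstrained. Your write-up merely makes explicit (via the entrywise nonnegativity of $M^{-1}$ and its vanishing $(3,3)$ entry) the term-by-term argument that the paper compresses into the sentence ``To also satisfy \eqref{Eky=0}, the first two of the $\g_i$'s must be equal to zero.''
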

\begin{proof}
If for $\ta\to\infty$, the trajectory approaches the surface of
the cone, then the kinetic part in the constraint \eqref{consy}
turns to zero. In terms of the limits $\g_i$
\be\label{Eky=0}
\frac34 \g _ 1^2 +
 2 \g_1 \g _ 2 + \g _ 2^2 + \g _ 2 \g_3 + \g_3
 \g_1=0
\ee
The constraint \eqref{consy} requires that the potential part also
turns to 0. This means that the asymptotics $y_i= \g_i
t+o(t),~i=1,2,3$ has all $\g_i\le 0$. To also satisfy
\eqref{Eky=0}, the first two of the $\g_i$'s must be equal to
zero.
\end{proof}

\underline{Stage 2}: proof of Proposition \ref{g_i=0}
\begin{proof}
In the Lemma, we proved $\g_1=\g_2=0$, $\g_3\le 0$. We are going
to prove that $\g_3<0$ is impossible.

Assume $\g_3<0$. Adding first two equations in the right equation
of \eqref{L1qrs}, we obtain
\be\label{y12}
\ddot{y}_1+\ddot{y}_2=\exp{y_3}.
\ee
Since  $\lim_{\ta\to\infty}\dot{y}_3=\g_3<0$, then for all $\ep>0$
a time parameter $T$ exists such that for all $\ta>T$, we have
\be
\dot{y}_3\in\, ]\g_3-\ep,~\g_3+\ep[,~ \text{ whence
}~y_3-y_3(T)\in \,](\g_3-\ep)(\ta-T),~(\g_3+\ep)(\ta-T)[\,.
\ee
Choose $\ep$ such that $\g_3+\ep<0$. We have, from \eqref{y12},
\be
\ddot{y}_1+\ddot{y}_2\in \left.
\right]e^{y_3(T)+(\g_3-\ep)(\ta-T)},~e^{y_3(T)+(\g_3+\ep)(\ta-T)}\left[
\right.\,,
\ee
with both exponents negative for sufficiently large $\ta$. Hence,
for these $\ta$,
\be\label{dot-ineq}
\dot{y}_1+\dot{y}_2\in \left.
\right]\frac{1}{\g_3-\ep}e^{y_3(T)+(\g_3-\ep)(\ta-T)}+C_1,~\frac{1}{\g_3+\ep}e^{y_3(T)+(\g_3+\ep)(\ta-T)}+C_1\left[
\right.~,
\ee
where $C_1$ is a constant of integration. Since
$\g_1=\lim_{\ta\to\infty}\dot{y}_1=0$ and
$\g_2=\lim_{\ta\to\infty}\dot{y}_2=0$ (from the Lemma), we have
$C_1=0$. Then, integrating again \eqref{dot-ineq}, we obtain
\be
{y}_1+{y}_2\in \left.
\right]\frac{1}{(\g_3-\ep)^2}e^{y_3(T)+(\g_3-\ep)(\ta-T)}+C_2,~\frac{1}{(\g_3+\ep)^2}e^{y_3(T)+(\g_3+\ep)(\ta-T)}+C_2\left[
\right.~,
\ee
where $C_2$ is a constant of the next integration. However the
constraint \eqref{consy} requires that both ${y}_1\to -\infty$ and
${y}_2\to -\infty$ as $\ta\to\infty$, while both ends of the
interval on the r.h.s. tend to a finite $C_2$. Hence, the
assumption $\g_3<0$ leads to a contradiction, whence $\g_3=0$.

The conclusion that all $\g_i,~~i=1,2,3$ are equal to zero means
that all trajectories tend to the apex.
\end{proof}
\end{appendices}

\end{document}